\theoremstyle{thmstyleone}%
\newtheorem{theorem}{Theorem}
\theoremstyle{thmstyletwo}%
\newtheorem{example}{Example}%
\newtheorem{remark}{Remark}%
\newtheorem{corollary}{Corollary}
\theoremstyle{thmstylethree}%
\newtheorem{definition}{Definition}%
\newtheorem{observation}{Observation}
\newtheorem{lemma}{Lemma}
\begin{document}

\title[Geometry of fitness landscapes]{Geometry of fitness landscapes: Peaks, shapes and universal positive epistasis}


\author*[1]{\fnm{Kristina} \sur{Crona}}\email{kcrona@american.edu}

\author[2]{\fnm{Joachim} \sur{Krug}}\email{jkrug@uni-koeln.de}

\author[3]{\fnm{Malvika} \sur{Srivastava}}\email{malvika.srivastava@env.ethz.ch}

\affil*[1]{\orgdiv{Department of Mathematics and Statistics}, \orgname{American University}, 
 \city{Washington DC}, 
 \country{United States}}

\affil[2]{\orgdiv{Institute for Biological Physics}, 
\orgname{University of Cologne}, 
\city{K\"oln},
\country{Germany}}

\affil[3]{\orgdiv{Department of Environmental Systems Science}, \orgname{ETH Z\"urich}, 
\city{Z\"urich}, 
\country{Switzerland}}


\abstract{Darwinian evolution is driven by random mutations, genetic recombination (gene shuffling)
and selection that favors genotypes with high fitness. 
For systems where each  genotype
can be represented as a bitstring of length $L$,
an overview of possible evolutionary trajectories is provided
by  the $L$-cube graph
with nodes labeled by genotypes and edges directed toward the genotype with higher fitness.
Peaks (sinks in the graphs)  are important since a population  can get stranded at a suboptimal peak.
The fitness landscape is defined by the fitness values of all genotypes
in the system. Some notion of curvature is
necessary for a more complete analysis of the landscapes, 
including the effect of recombination. The shape approach
uses triangulations (shapes) induced by fitness landscapes.
The main topic for this work is the interplay between peak patterns and shapes.
Because of constraints on the shapes for $L=3$ imposed by peaks,
 there are in total 25 possible combinations of peak
 patterns and shapes. Similar constraints exist for higher $L$.
 Specifically, we show that the constraints induced by the staircase triangulation can be formulated as a condition of 
 {\emph{universal positive epistasis}}, an order relation on the fitness effects of arbitrary sets of mutations that respects
 the inclusion relation between the corresponding genetic backgrounds.
 We apply the concept to a large protein fitness landscape
 for an immunoglobulin-binding protein expressed in Streptococcal bacteria.}

\keywords{Polytope, triangulation, directed cube graph, epistasis, fitness landscape}


\pacs[AMS Classification]{52B20, 51M20, 05E40, 05C20, 92B05}


\maketitle

\section{Introduction}\label{sec1}

This work on acyclic cube graphs and triangulations of cubes is motivated by applications to evolutionary biology. Darwinian evolution can many
times be analyzed by considering  biallelic systems. For a biallelic  $L$-locus system, a genotype $g$ can be represented as a bit string of length $L$. The evolutionary potential for a genotype $g$ is measured by its fitness. 
The fitness landscape $w: \{0,1 \}^L \mapsto  \mathbb R_{\geq 0}$ is determined by the fitness values for all $2^L$ genotypes
\cite{dvk14}. Recent approaches to fitness landscapes rely on analyzing $L$-cube graphs and triangulations that are induced by the
landscapes \cite{Crona2013b}.


One can give a complete (informal) description for $L=2$. The four genotypes are denoted $00, 10, 01, 11$.
The induced graph on the square is determined by the condition that each arrow points toward the genotype of higher fitness
(Figure \ref{Fig:two-locus}). A Darwinian process starting from the genotype $00$ corresponds to a path that respects the arrows.
The graph \ref{Fig:two-locus}C has two peaks (or sinks), which means that an evolving population can get stranded at a suboptimal peak.

The quantity $\epsilon = w_{11}+w_{00}-w_{10}-w_{01}$ measures the deviation of the fitness landscape from additivity
known as epistasis \cite{Domingo2019,Krug2021,Poelwijk2016}. If $\epsilon > 0$ the triangulation
induced by the fitness landscape divides
 the square into two triangles $\{00, 10, 11 \}$ and  $\{00, 01, 11 \}$.
If $\epsilon <0$,  the triangles are instead
$\{00, 10, 01 \}$ and  $\{10, 01, 11 \}$.
For an intuitive understanding, if two genotypes do not belong to the same triangle, 
they could increase their average fitness by swapping positions
(informally, 
$
11+00 \mapsto 10 + 01 
$
increases fitness if  $\epsilon<0$).

In general, we assume that the fitness landscape
is generic in the sense of \cite{BPS:2007}, in particular
that no two genotypes have equal fitness.
The graph determined by the fitness landscape, or the fitness graph \cite{Crona2013a,dvk09},
is the acyclic $L$-cube graph defined by the condition that each edge is
directed toward the genotype of higher fitness. 

Following \cite{BPS:2007}, let $\Delta$ be the simplex  
$
\{ p \in [0,1]^{2^L} : \sum p_g =1 \},
$
where  $p$ can be interpreted as the frequencies of the genotypes in a population,
and $p \cdot w$ measures the average fitness of the  population.
 Let $\rho: \Delta \mapsto [0,1]^L$ be the map  defined as
 $ \left( \rho (p) \right)_i=    \sum _{g_i=1} p_g  $.
 Note that $\rho $ maps gene frequencies to allele frequencies, i.e, 
to  the frequencies of 1's at a particular locus (or string position).
 Define
 \[
 \tilde w( v )=\max_p \{ p \cdot w: \rho (p)= v \} \text{ for all }   v  \in [0,1]^L .
 \]
Then $ \tilde w$ is a piecewise linear function.
The domains of linearity of $ \tilde w$
define a regular triangulation of  $[0,1]^L $ (because $w$ is
generic \cite{triangulations}). The triangulation is called the shape of the fitness landscape.
Recent work on triangulations and fitness graphs
includes \cite{Crona2017,Crona2020a,Crona2020b,Das2020,Eble2019,Eble2020,GouldE11951,Kaznatcheev2019, Lienkaemper2018}.

Fitness graphs and shapes encode information of very different nature,
and can be considered complementary \cite{Crona2013b}.  However, there is also
some overlap in the information. 
The peak pattern for a fitness landscape refers to the number 
of peaks and how they are positioned in the graph, up to cube symmetry.
Some peak patterns impose constraints on the triangulations for $L=2$ (Figure \ref{Fig:two-locus}C)
and $L=3$ \cite{Srivastava}. The main topic for this work is the interplay 
between peak patterns and triangulations. 
For the reader's convenience, a dictionary between terms
used in biology and mathematics has been provided, see Table \ref{notation}.

\begin{figure}
\begin{center}
\includegraphics[width=12cm]{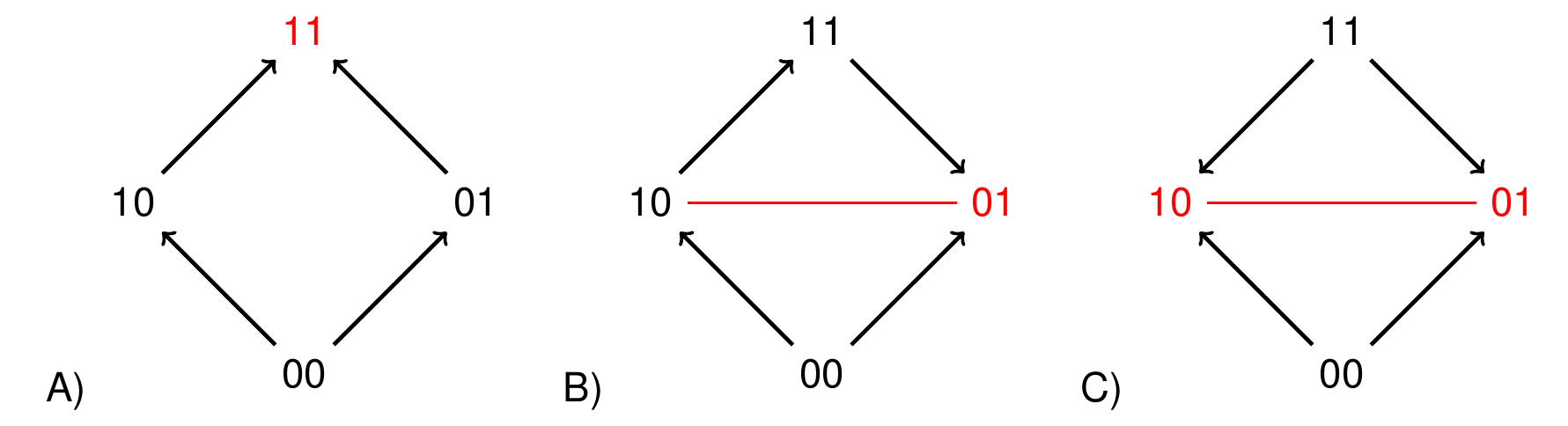}
\end{center}
\caption{Fitness graphs for two loci under the assumption that
$00$ has minimal fitness. If the genotypes are positioned
as in the figure, the three types can be characterized as graphs with all arrows up,  one arrow down, and 
two arrows down. Peak genotypes are marked in red. Graphs B) and C) display sign epistasis \cite{Weinreich2005}, which
means that at least one pair of arrows on parallel edges point in opposite directions. In graph C) sign epistasis
is reciprocal \cite{Poelwijk2007}.
By Observation \ref{observation:diagonals},
graphs B) and C) are compatible only with the triangulation indicated
by the red lines. Note that for general fitness graphs any genotype can have minimal fitness.}
\label{Fig:two-locus}
\end{figure}


\vspace*{0.5cm}

\noindent
{\bf{Main Results.}}
For  generic fitness landscapes $w: \{0,1 \}^L \mapsto  \mathbb R_{\geq 0}$
we compare the induced peak patterns and triangulations.
For $L=3$ we show that there are exactly 25 possible 
combinations of peak patterns and triangulations (\textit{Theorem \ref{theoremfirst}}).
The statistical distribution of peak patterns conditioned on the triangulation is obtained from simulations of random
fitness landscapes. 
For higher  $L$ we show that some peak patterns
are compatible with all triangulations,
whereas other peak patterns are incompatible
with almost all triangulations. A peak pattern and a triangulation are referred to as
compatible if there exists a fitness landscape that induces both of them.

Additional results for general $L$ can be obtained for fitness landscapes that induce staircase 
triangulations (Definition \ref{staircase}).
We show that these landscapes display a property that we call \textit{universal positive epistasis}.
The property holds if 
\begin{equation}
  \label{Positive_epistasis}
w_{g \cup g'} + w_{g \cap g'} \geq  w_{g} + w_{g'}
\end{equation}
for  all  pairs of genotypes $g$ and $g'$ interpreted as sets of 1-alleles. 
It implies that the fitness effect of a given set of mutations on different genetic backgrounds
inherits the partial order induced by the subset-superset relation between background genotypes.
The conditions \eqref{Positive_epistasis} characterize the (standard) staircase triangulation.
Universal positive epistasis limits the maximal
number of peaks in the fitness landscape, and explicit upper bounds are derived for $L=4$ and $L=5$
(\textit{Theorem \ref{theoremsecond}}).
Gr\"obner bases for staircase triangulations (Section \ref{circuits})
were used for establishing that the bound for $L=4$ is sharp.
The analysis of a large empirical data set of fitness interactions between protein substitutions selected for their
positive epistatic effects \cite{Wu2016} shows a significant overrepresentation of the staircase triangulation. 
  
\section{Peak patterns and triangulations for $L=3$}
With notation as in the introduction, we consider
fitness graphs and triangulations induced by fitness landscapes. 
The peak set of a fitness graph determines
its {\emph{peak pattern}}.
Two fitness graphs have the
same peak pattern if for instance
the set of peaks differ by a cube rotation 
only (see below for a more formal discussion).
In particular, all graphs with a single peak have the same
peak pattern.

\begin{figure}[t]
\begin{center}
\includegraphics[width=0.6\textwidth]{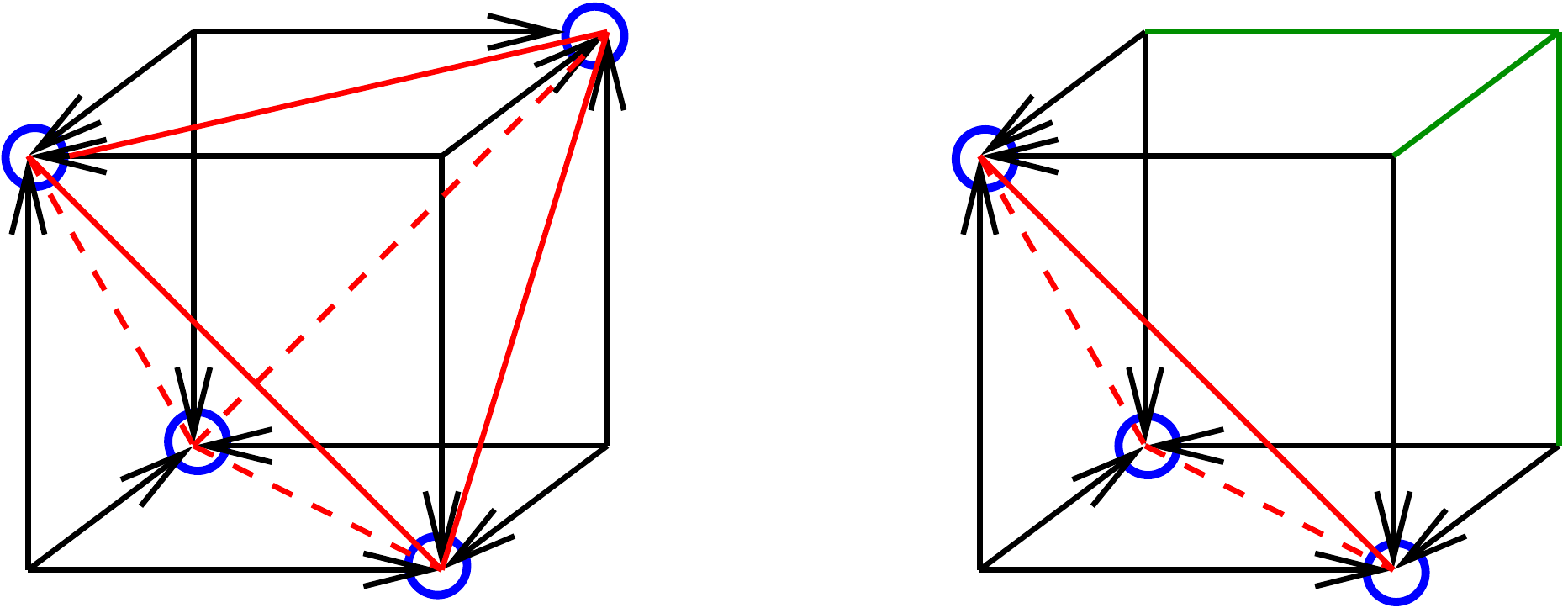}
\end{center}
\caption{Three locus fitness graphs with 4 and 3 peaks.
Face triangulations are indicated by full red lines for the exposed faces and dashed red lines for the hidden
faces of the cube. Left panel: For the graph with 4 peaks, the peak pattern fully specifies
the triangulation of the faces. The triangulation has to be of type 1 or 2 (these types 
have identical face triangulations, see Figure 4). 
Right panel: 
The presence of three peaks implies that a corner is isolated,
which is inconsistent with type 6 triangulations.}\label{Fig:34peaks}
\end{figure}

\begin{figure}[t]
\begin{center}
\includegraphics[scale=0.4]{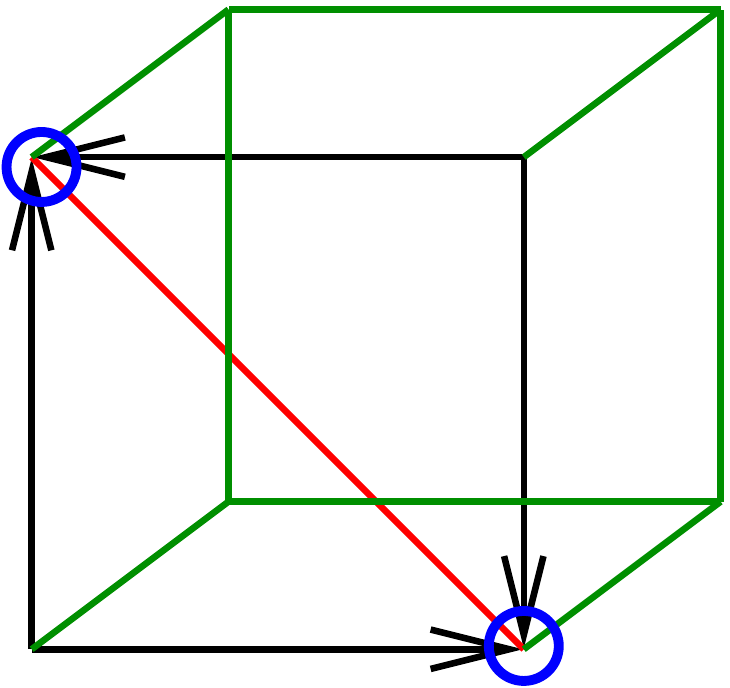}
\qquad
\includegraphics[scale=0.4]{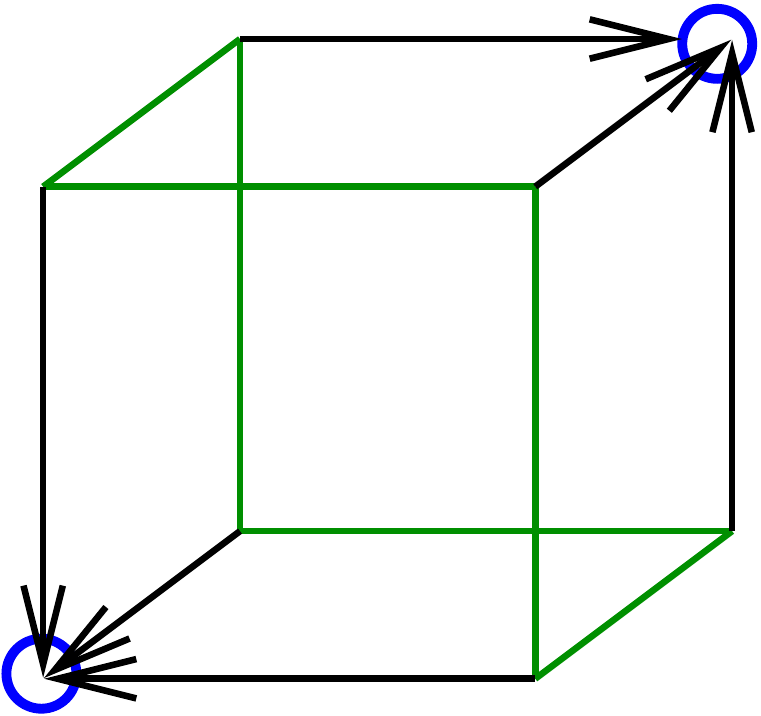}
\end{center}
\caption{For graphs with two peaks, there are two possible peak patterns, depending on if the distance between the peaks is two or three. 
If the distance is two, the face triangulation is as indicated by the red line. 
Both cases are compatible with all six triangulations.}
\label{Fig:2peaks}
\end{figure}

\subsection{Classification results for $L=3$}
\begin{observation}
For $L=3$ there are 5 peak patterns, i.e., four patterns
(Figures \ref{Fig:34peaks} and \ref{Fig:2peaks}) in addition
to the case with a single peak.
\begin{itemize}
\item Fitness graphs have 1-4 peaks.
\item The peak pattern is unique if the number of peaks is 1, 3 or 4.
\item There are two peak patterns  if the number of peaks is 2.
The distance between the peaks is either 2 or 3.
\end{itemize}
\end{observation}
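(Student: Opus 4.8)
The plan is to translate the three bulleted claims into statements about independent sets of the $3$-cube graph $Q_3$ and about their orbits under the symmetry group of the cube, and then to verify realizability. First I would record the basic structural constraint: since $w$ is generic the induced orientation is acyclic, and if two adjacent genotypes were both peaks the edge joining them would have to point into both, which is impossible; hence the peak set is always an \emph{independent set} of $Q_3$. A global fitness maximum is necessarily a sink, so there is at least one peak. Since the independence number of $Q_3$ equals $4$ — the two parity classes $\{000,110,101,011\}$ and $\{100,010,001,111\}$ are the only independent sets of that size, because the non-neighbors of $000$ that are mutually non-adjacent already force the even class — the number of peaks lies between $1$ and $4$. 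This yields the range stated in the first bullet.

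Next I would classify the non-empty independent sets up to cube symmetry, using that $\mathrm{Aut}(Q_3)$ is the full isometry group of the cube (order $48$) and that $Q_3$ is distance-transitive. For size $1$ the action is vertex-transitive, giving a single pattern. For size $2$ a pair is independent iff its two peaks are at Hamming distance $2$ or $3$; distance-transitivity makes each distance a single orbit, and since every symmetry preserves distance these two orbits are genuinely distinct, giving exactly the two patterns of the third bullet. The key remaining point is that an independent set of size $\ge 3$ cannot contain an antipodal (distance-$3$) pair: a third vertex would sit at distances $d$ and $3-d$ from the two antipodes, forcing $d\ge 2$ and $3-d\ge 2$ simultaneously, which is impossible. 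Hence every independent triple has all pairwise distances equal to $2$ and therefore lies inside a single parity class; since the parity-preserving symmetries act as the full symmetric group $S_4$ on the four vertices of a class, all triples in a class are equivalent, and central inversion (or a suitable $90^\circ$ rotation) identifies the two classes, so size $3$ gives a single pattern. For size $4$ the only independent sets are the two parity classes, again one pattern.

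Counting orbits then gives $1+2+1+1=5$ candidate peak patterns, matching the Observation. The final step is realizability: for one representative of each orbit I would exhibit a generic fitness landscape whose peak set is exactly that independent set. This is clean because any acyclic orientation of $Q_3$ is induced by some fitness landscape (assign values along a linear extension of the orientation), so it suffices to orient the edges so that precisely the chosen vertices are sinks; for these five small representatives this is immediate, and indeed the required orientations are displayed in Figures \ref{Fig:34peaks} and \ref{Fig:2peaks}.

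The step I expect to be the main obstacle is the size-$3$ uniqueness (and, to a lesser degree, size-$4$): ruling out any ``exotic'' independent set and checking that a single symmetry orbit captures all of them. The antipodal-pair exclusion is the crucial lemma here, since it forces such sets into a parity class and thereby collapses the count; once it is in hand, the transitivity of the symmetry action finishes the classification, while the independence property, the $1$--$4$ range, and realizability reduce to routine or finite checks.
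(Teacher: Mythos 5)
Your proof is correct, and it takes a genuinely more systematic route than the paper, which states this Observation without any proof at all, treating the classification as a direct finite check illustrated by Figures \ref{Fig:34peaks} and \ref{Fig:2peaks}. Every step of your argument is sound: the peak set is an independent set of $Q_3$ (an edge between two peaks would have to point into both); a global optimum guarantees at least one peak; the independence number of $Q_3$ is $4$, attained only by the two parity classes; distance-transitivity of $Q_3$ under the order-$48$ hyperoctahedral group gives exactly one orbit of independent pairs at each of the distances $2$ and $3$; and your antipodal-exclusion lemma (a third vertex $h$ would satisfy $d(h,g)+d(h,\bar g)=3$, so both distances cannot be $\geq 2$) is precisely what collapses all independent triples and quadruples into a single parity class, after which the stabilizer of a class acting as $S_4$ on its four vertices, together with the antipodal map (which swaps the classes since $L=3$ is odd), yields uniqueness for sizes $3$ and $4$. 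You are also right to insist on the realizability step --- the Observation asserts that five patterns \emph{occur}, not merely that at most five are possible --- and your justification via topological orders (any acyclic orientation of the cube graph is induced by a generic landscape) is the standard correct one, where the paper instead simply exhibits realizing orientations in its figures. As a small streamlining remark, your antipodal lemma already implies that every independent set of size $\geq 3$ has all pairwise distances equal to $2$ and hence lies in one parity class, so the separate argument you sketch for identifying the maximum independent sets is redundant. What your approach buys is that it proves the uniqueness claims rather than leaving them to inspection, and it does so in exactly the framework the paper only introduces later, where a peak pattern is formally defined as an orbit of peak sets under the hyperoctahedral group; what the paper's implicit approach buys is brevity, since for $L=3$ the whole classification is small enough to verify by hand or by the enumeration it cites.
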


A triangulation of the cube can be described as a subdivision
of the cube into tetrahedra. 
It is well known that in the case of the 3-cube only 6 triangulations can arise, 
up to symmetries, see \cite[Thm.~6.3.10]{triangulations}. The possible shapes
are illustrated in Figure 4.
The graphs below the cubes aid visual interpretations of the triangulations.
The nodes of the graphs represent tetrahedra.
Two vertices are joined by an edge
if the corresponding tetrahedra have a triangle in common,
and a shaded region indicates that the 
 tetrahedra share a line segment.
The graphs are defined as tight spans dual to the triangulations \cite{Herrmann2014}.
The corner tetrahedra (see the top row in Figure 4)
appear as nodes of degree 1 in the tight spans (i.e., nodes with only one outgoing edge).

\begin{figure}
\begin{center}
\includegraphics[width=\textwidth]{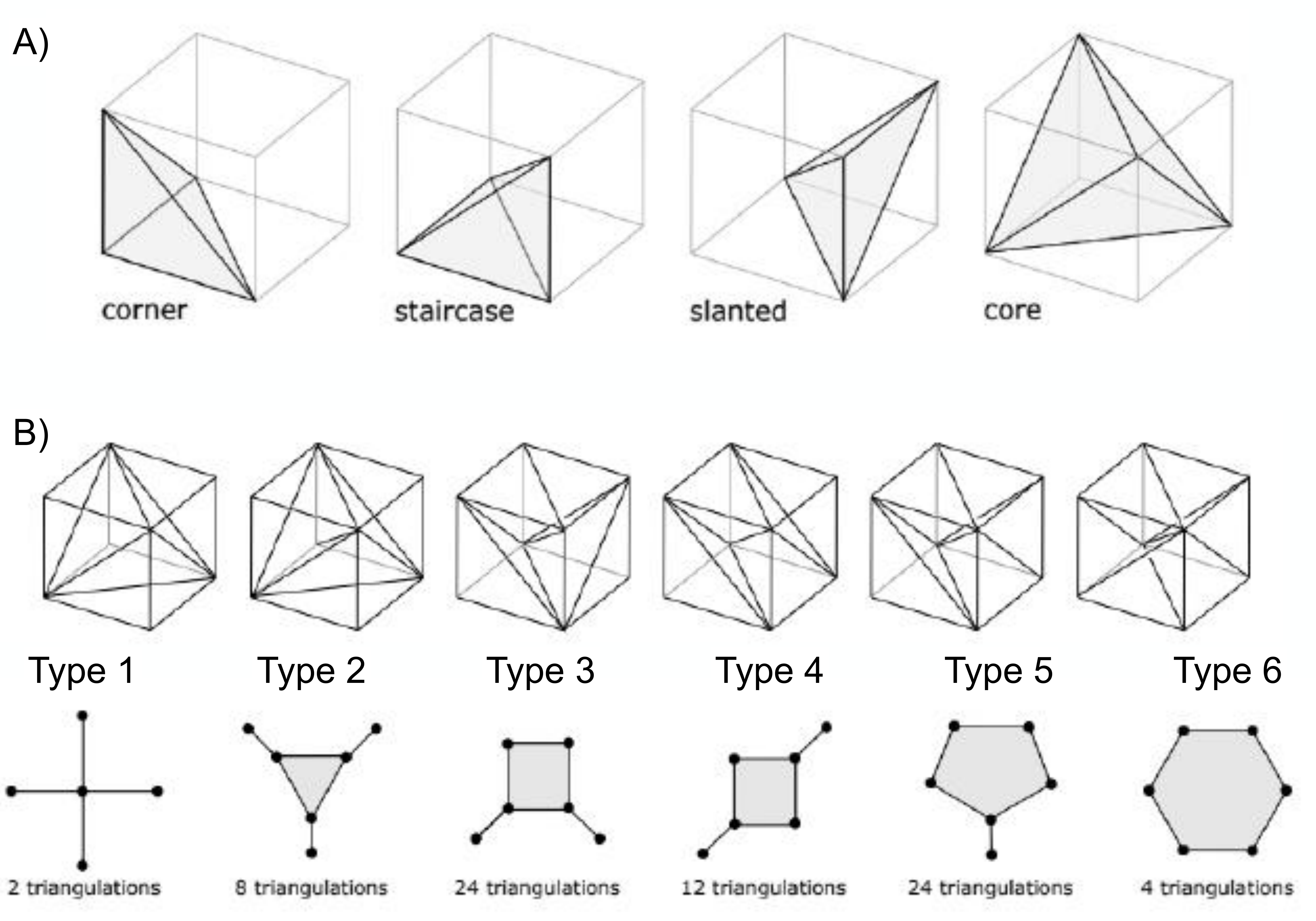}
\caption{The six types of triangulations of the 3-cube, together with the tight spans dual to the triangulations (modified from \cite{Pellerin2018} with
permission).  Row A) shows the types of simplices that occur in the triangulations.
}
\end{center}
\label{Fig:triangulations}
\end{figure}

\begin{theorem}\label{theoremfirst}
The five peak patterns described in the previous observation, and the six triangulation types, as  described in
Figure 4, 
can be combined in 25 ways. Specifically
\begin{itemize}
\item Fitness graphs with 4 peaks are compatible with triangulations of type 1 and 2, but not with any other triangulations.
\item Fitness graphs with 3 peaks are compatible with triangulations of type 1-5, but not with type 6.
\item The remaining three peak patterns are compatible with all six triangulation types.
\end{itemize}
\end{theorem}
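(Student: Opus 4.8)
The plan is to reduce everything to two facts: the \emph{face restriction}, namely that the shape of a landscape cuts out on each $2$-face of the cube the two-locus triangulation of that face, whose diagonal is fixed by the sign of the face epistasis (Observation~\ref{observation:diagonals}); and the \emph{classification} of the six triangulation types together with their combinatorial data---face diagonals and tight spans---as displayed in Figure~4 and \cite[Thm.~6.3.10]{triangulations}. In particular, if a face carries reciprocal sign epistasis, so that both of its ``intermediate'' vertices are local fitness maxima, the diagonal is forced to join those two vertices. I would then split the theorem into a \emph{necessity} part, excluding the forbidden pairs by exhibiting forced diagonals and matching them against Figure~4, and a \emph{sufficiency} part, realizing each admissible pair by a landscape. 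A preliminary step, already giving the five patterns, is that two peaks cannot be adjacent (their shared edge points to the fitter endpoint), so the peak set is an independent set of the cube graph; a short enumeration shows the maximal such sets are the two parity classes, and that any three peaks are precisely the three neighbours of a common vertex.

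For four peaks the peak set is an entire parity class, so every edge points toward that class and on each of the six faces both same-parity vertices are local maxima. Reciprocal sign epistasis then forces all six face diagonals simultaneously, yielding the single ``central'' face pattern in which every diagonal joins two peaks. Reading Figure~4, exactly two of the six types carry this face triangulation---the two that share it and differ only in the interior, according to whether the space diagonal through the source--peak antipode is used---so the shape must be of type~$1$ or~$2$ and no other. I would stress that both genuinely occur: the forced boundary admits two completions, so neither can be excluded.

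For three peaks the three forced diagonals lie on the three faces meeting the common neighbour $v$ and each joins the two peaks on that face, so the triangle on the three peaks has all its edges present. Because the peaks are sinks, their common neighbour $v$ and its antipode $\bar v$ are both forced below the peak level; this rules out the competing completion through the diagonal $v\bar v$ and makes the peak triangle an honest face, capping $v$ into the corner simplex $\{v,\,v+e_1,\,v+e_2,\,v+e_3\}$. Thus the shape contains a corner tetrahedron, i.e.\ a degree-one node of its tight span, which excludes type~$6$ (the one type whose tight span has no such node) while leaving types~$1$--$5$. For the remaining patterns no obstruction arises: two antipodal peaks share no face and force no diagonal, and two peaks at distance two force only the single diagonal of their common face, neither of which is incompatible with any type.

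Sufficiency is where the real work lies, and I expect it to be the main obstacle: for each of the $25$ pairs I must produce a generic landscape inducing both the prescribed peak pattern and the prescribed shape. Since each regular triangulation is the secondary cone of a full-dimensional family of height vectors, the natural approach is to start from a landscape of the target type and deform the fitness values within that cone---hence without changing the shape---until the peaks reach the desired positions, the forced diagonals above guaranteeing that the two sets of constraints are compatible. The delicate cases are those requiring all six shapes for one pattern, and those requiring both interior completions ($1,2$ with four peaks, $1$--$5$ with three peaks): here one must check that the edge inequalities governing peaks and the circuit inequalities governing the interior can be met at once, across the symmetry orbit of each type. This is ultimately a finite verification, which I would close most transparently by tabulating one explicit landscape per admissible pair and confirming its peak pattern and shape directly.
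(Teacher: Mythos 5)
Your necessity argument reaches the same exclusions as the paper but by a genuinely stronger route, and one step of it needs repair. The paper works with the weaker notion of an \emph{isolated} vertex (one through which no induced 2-face diagonal passes): four peaks make the four non-peaks isolated, excluding types 3--6 since those have at most two isolated vertices, and three peaks make the common neighbour $v$ isolated, excluding only type 6, which has none. The paper deliberately stops there, noting (Remark \ref{remark:slicedoff}) that isolated does \emph{not} imply sliced off. You claim more: that three peaks force the corner simplex at $v$. This is in fact true, and your skeleton is right --- $v$ is isolated, so the simplex containing it lies in $\{v,p_1,p_2,p_3,\bar v\}$, and it remains to bar the diagonal $v\bar v$ --- but the reason you give (``$v$ and $\bar v$ are both forced below the peak level'') is insufficient, and ``peaks are sinks'' alone does not even place $\bar v$ below the peaks, since $\bar v$ is adjacent to no peak. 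The correct one-line repair: because the pattern has \emph{exactly} three peaks, $\bar v$ is not a peak, so $w_{\bar v}<w_{\bar p_i}$ for some neighbour $\bar p_i$ of $\bar v$ (the antipode of the peak $p_i$), while $w_v<w_{p_i}$; adding gives $w_v+w_{\bar v}<w_{p_i}+w_{\bar p_i}$, and the circuit criterion for pairs with common midpoint (the same comparison of fitness sums that governs 2-face diagonals) then excludes $v\bar v$ from any simplex. Being below the peaks would not by itself do: if the $w_{\bar p_i}$ were tiny and $w_{\bar v}$ large, the diagonal inequality could hold --- except that $\bar v$ would then be a fourth peak, which is precisely the hypothesis your stated reason fails to invoke. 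With that fix, your conclusion (a forced corner tetrahedron, hence a degree-one node of the tight span) does exclude exactly type 6, since every other type in Figure 4 contains a corner simplex; this is a sharper structural fact than the paper's, at the price of the more delicate diagonal argument. Your four-peak case (all six face diagonals forced, boundary matched against Figure 4) is equivalent to the paper's isolation count and is fine.

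On sufficiency, you and the paper end up in the same place: the paper proves existence of all 25 combinations by random sampling (each appears with positive frequency in Figure \ref{fig:Malvika1}, and any sampled landscape is a witness), which is exactly the finite tabulation you propose but do not carry out. Your secondary-cone deformation heuristic is not justified as stated --- moving within the cone preserves the shape, but nothing shows the peak set can be steered to a prescribed pattern without leaving the cone --- so, as you concede, you would fall back on explicit examples anyway. Net assessment: the proposal is correct in structure and completes to a proof once the $v\bar v$ exclusion is argued properly and one witness landscape per admissible pair is exhibited; relative to the paper it trades the simple isolated-vertex bookkeeping for a stronger forced-corner-simplex lemma.
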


Before giving a proof we introduce some formal notation for general $L$.
The peaks of a fitness graph define a peak pattern for any $L$.
Formally, a peak pattern is an equivalence class of
peak sets under the action of the hyperoctahedral group 
of cube symmetries (the group has order $2^L L!$). Peak patterns have been enumerated for $L$-cubes up to $L=6$ \cite{Oros2022}
(see Table \ref{table:peakpatterns}). 

\begin{table}\centerline{
  \begin{tabular}{c|cccccc}
    $L$ & 2 & 3 & 4 & 5 & 6 \\
    \hline
      regular triangulations & 2 &  74 &  87,959,448 & -  & -  \\
      \hline
      symmetry classes & 1 & 6 & 235,277 & - & -  \\
       \hline
      peak patterns & 2 & 5 & 20 & 287 & 519,194       
  \end{tabular}}
  \caption{The table summarizes known results for the number of regular triangulations and their symmetry classes
    \cite{triangulations,MR2398786}, and the number of peak patterns \cite{Oros2022} for $L$-cubes.}\label{table:peakpatterns}
\end{table}

A triangulation of the $L$-cube is a subdivision
of the cube into $L$-simplices (i.e., simplices that can be described as
the convex hull of $L+1$ affinely independent points),
such that any non-empty intersection between 
simplices are faces of both of them.
The triangulation induced by the fitness landscape $w$ was defined in the introduction.
In the terminology of \cite{triangulations}, 
the triangulation is the projection of 
the upper envelope of the convex hull of the genotypes lifted by $w$.
(For clarity, each genotype $g$ determines a point in $\mathbb R^{L+1}$ by 
regarding its fitness $w_g$ as a height coordinate. The triangulation
is the projection of the upper faces of the polytope 
obtained as the convex hull of such points).
Triangulations induced by fitness landscapes are defined 
as regular triangulations. There exist non-regular triangulations 
for $L \geq 4$ \cite{triangulations,pournin2013}, but such triangulations are not considered in this
paper.

For $L=3$, a triangulation divides each side of the cube along a 2-face diagonal,
i.e., the diagonal determined by vertices $g, g'$ such that
$ \{g, g'\}$ belongs to a tetrahedron in the triangulation. Such diagonals
are  referred to as induced diagonals.
Similarly, for general $L$ a triangulation induces a 2-face 
diagonal of each 2-face.
Specifically, consider a 2-face where the two pairs
 of vertices with distance two represent  genotypes $\{g, g'\}$
 and $\{h, h'\}$. The $g,g'$-diagonal is
 induced by the triangulation  if
 $w_{g}+w_{g'}> w_h+w_{h'}$, which implies that $h$ and $h'$ 
 do not belong to the same simplex in the triangulation. The following
 observation specifies how induced diagonals are constrained by the fitness
 graph (see Fig.~\ref{Fig:two-locus} for illustration).
\begin{observation}\label{observation:diagonals}
Consider a 2-face composed of pairs of vertices  $\{g, g'\}$ and $\{h,
h'\}$ with distance two. Suppose the fitness graph displays sign
epistasis on the 2-face, i.e. at least one of the two pairs of
arrows on parallel edges point in opposite directions. Then the
induced diagonal connects the arrow heads. 
\end{observation}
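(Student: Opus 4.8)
The observation concerns a single 2-face of the cube, so I want to reduce everything to the familiar $L=2$ picture already analyzed in the introduction. The four vertices are $g, g', h, h'$, where $g, g'$ are at distance two (diagonally opposite on the square) and likewise $h, h'$. Writing the two-locus epistasis quantity for this face as $\epsilon = w_g + w_{g'} - w_h - w_{h'}$, the criterion stated just before the observation is that the $g,g'$-diagonal is induced precisely when $\epsilon > 0$, and the $h,h'$-diagonal when $\epsilon < 0$. My plan is therefore to show that sign epistasis on the face forces the induced diagonal to connect the two arrowheads by directly computing the sign of $\epsilon$ from the arrow directions.

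**The main argument.**

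First I would fix coordinates: label the face so that $g, g'$ are one diagonal pair and $h, h'$ the other, and recall that the two pairs of parallel edges connect $g$ to each of $h, h'$ and $g'$ to each of $h, h'$. Sign epistasis means at least one pair of parallel arrows points oppositely. By the symmetry of the square under swapping the two diagonals, it suffices to determine, in each sign-epistasis configuration, which diagonal pair has the larger fitness sum. The cleanest route is a short case analysis on where the peak (maximal-fitness vertex) of the face sits together with the direction of the relevant opposing arrow pair. Concretely, I would show: whenever an edge points into $g$ and the parallel edge points into $g'$ (arrowheads at $g$ and $g'$), the vertices $g, g'$ are both local maxima along those edges, which pushes $w_g + w_{g'}$ up relative to $w_h + w_{h'}$, yielding $\epsilon > 0$ and hence the $g,g'$-diagonal — exactly the diagonal joining the two arrowheads. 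The reciprocal sign-epistasis case (graph C in Figure \ref{Fig:two-locus}, two arrows down) is the cleanest: both arrowheads lie on the same diagonal, and that diagonal is forced. The single-arrow-down case (graph B) requires checking that the one reversed arrow already determines the sign of $\epsilon$.

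**Expected obstacle.**

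I expect the genuine work to lie in the single sign-epistasis case (only one pair of opposing arrows), because there the sign of $\epsilon$ is not visually obvious from one reversed arrow alone. The resolution is that an arrow direction is an \emph{inequality} between two fitness values, and I would combine the inequality coming from the reversed pair with the inequalities from the concordant pair to pin down the sign of $\epsilon$; the point is that the arrowheads of the reversed pair lie on one common diagonal, and summing the two edge inequalities that share those heads gives $w_g + w_{g'} > w_h + w_{h'}$ directly. The genericity assumption guarantees strict inequalities throughout, so no boundary cases arise.

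**Conclusion.**

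Once the sign of $\epsilon$ is matched to the arrowhead diagonal in every sign-epistasis configuration, the criterion preceding the observation immediately identifies that diagonal as the induced one, completing the proof. The argument is entirely local to the 2-face and does not depend on $L$, which is why the statement holds verbatim for general $L$.
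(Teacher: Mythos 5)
Your proposal is correct and takes essentially the same route as the paper: the decisive step in your ``expected obstacle'' paragraph --- that the heads of an antiparallel pair of arrows necessarily lie on a common diagonal, so summing the two edge inequalities $w_g > w_{h'}$ and $w_{g'} > w_h$ yields $w_g + w_{g'} > w_h + w_{h'}$ --- is exactly the paper's two-line proof. Your preliminary case analysis over the one-arrow-down and two-arrows-down configurations is unnecessary, since that single summation handles every sign-epistasis configuration at once.
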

\begin{proof} Assume w.l.o.g. that the antiparallel arrows point from
  $h'$ to $g$ and from $h$ to $g'$. This implies that $w_g > w_{h'}$
  and $w_{g'} > w_h$. Hence $w_{g}+w_{g'}> w_h+w_{h'}$ and the induced
  diagonal connects $g$ and $g'$.
  \end{proof}

Following \cite{triangulations} we say that the vertex with the central position
in a  corner simplex is sliced off by the triangulation
(see the bottom left vertex of the corner simplex in Figure 4). 
  For $L=3$ a sliced off vertex belongs to exactly one
tetrahedron. In general, a sliced
off vertex of the $L$-cube belongs to exactly one simplex of the triangulation (the
simplex consists of the vertex itself and its $L$ neighbors).


The proof of  Theorem \ref{theoremfirst}  includes 
an analysis of 2-face diagonals 
induced by the triangulation.
The diagonals marked red in Figures \ref{Fig:34peaks} and \ref{Fig:2peaks}
are induced by the triangulation, because they connect peaks.
For the proof it is  helpful to introduce a new concept.
For arbitrary $L$, we call a genotype $g$ isolated if there is no genotype 
$g'$ such that $\{g, g'\}$ belongs to the same simplex and $\| g-g' \|=2$,
i.e., no induced 2-face diagonal connects $g$ to another genotype.

\begin{remark}
  \label{remark:slicedoff}
If the triangulation slices off a genotype, 
then the genotype is isolated. However,
the converse is not true (see e.g. the Type 2 triangulation in Figure 4).
\end{remark}

\begin{observation} 
For the six triangulation types described in Figure 4,
\begin{itemize}
\item Type 1 triangulations have four isolated vertices.
\item Type 2 triangulations have four isolated vertices.
\item  Type 6 triangulations have no isolated vertices.
\item All the remaining triangulation types have one or two isolated vertices.
\end{itemize}
\end{observation}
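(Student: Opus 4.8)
The plan is to reduce the statement to a direct inspection of the six diagonal patterns drawn in Figure 4, after first isolating the only structure that matters. For $L=3$, note that any two vertices at Hamming distance two span a unique $2$-face of the cube, and conversely every distance-two pair is a diagonal of exactly one $2$-face; the only simplex edges that are not cube edges are therefore either such $2$-face diagonals (distance two) or a body diagonal of the tetrahedron (distance three). Since the definition of an isolated vertex refers only to distance-two edges, body diagonals are irrelevant. Moreover, restricting a triangulation to a $2$-face triangulates that square, and a square is cut by exactly one of its two diagonals, which then occurs as an edge of the tetrahedra meeting the face. Hence the set of \emph{induced diagonals} consists of exactly six segments, one per $2$-face, namely the red lines in Figure 4, and the reduction reads: a vertex $g$ is isolated if and only if none of these six induced diagonals is incident to $g$.

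With this reduction in hand, I would turn the count into finite bookkeeping carried out type by type. First I would record that the number of isolated vertices is invariant under the hyperoctahedral group, because cube symmetries send simplices to simplices and preserve Hamming distance, hence permute the induced diagonals and the isolated vertices; this makes ``number of isolated vertices'' well defined for each of the six symmetry types, so one representative triangulation per type suffices. For each type I would read the six induced diagonals off the red pattern (equivalently list the tetrahedra), and for each of the eight vertices record whether some induced diagonal meets it, the isolated vertices being those met by none. For Type 6 (the staircase about the body diagonal joining some $g$ and $\bar g$), the three faces at $g$ contribute the three diagonals through $g$ and the three faces at $\bar g$ the three through $\bar g$; together these reach the remaining six vertices as well, so every vertex is incident to an induced diagonal and none is isolated. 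For Types 1 and 2, whose face triangulations coincide, the same six-diagonal pattern is read off and yields the same four isolated vertices in each case, and Types 3, 4, 5 are treated identically, each giving one or two.

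As an independent check I would invoke the Remark: a sliced-off vertex lies in a single corner tetrahedron whose three non-cube edges are the diagonals of the opposite triangle, so no induced diagonal is incident to it and it is automatically isolated, while the converse fails, which is precisely why the tally must come from the induced-diagonal incidences rather than from corner tetrahedra (the degree-one nodes of the tight span) alone. The essential content lies entirely in the first paragraph; once the reduction is established, the statement is a routine verification over six explicit diagonal patterns. The step requiring the most care is confirming that each $2$-face contributes exactly one induced diagonal and that every distance-two simplex edge is a $2$-face diagonal, so that no induced diagonal is missed or spuriously introduced and the body diagonals do not disturb the isolated/non-isolated dichotomy.
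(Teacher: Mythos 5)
Your proposal is correct and follows essentially the same route the paper takes: the paper states this as an observation to be read off from the red induced diagonals in Figure 4, and its definition of an isolated vertex already identifies isolation with the absence of incident induced 2-face diagonals, which is exactly your reduction. Your additional bookkeeping --- exactly one induced diagonal per 2-face, invariance of the count under the hyperoctahedral group, and Types 1 and 2 sharing the same isolated vertices because their face triangulations coincide --- merely makes the figure inspection rigorous, without changing the approach.
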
 

The proof of Theorem \ref{theoremfirst}
relies on the previous observation and
simulations summarized in Figure \ref{fig:Malvika1} 
(see Section \ref{Sec:Statistics} for details on the simulations).

{\emph{Proof of Theorem  \ref{theoremfirst}.}
If the fitness graph has 4 peaks, then the remaining four
genotypes are isolated, which excludes the triangulations of types 3-6 by the previous observation.
If the fitness graph has three peaks, then the genotype adjacent to the three peaks is isolated,
which excludes the type 6 triangulation.
Out of the 30 combinations of peak patterns and triangulations, 5 are excluded
as argued. 
The  remaining 25 combinations of peak patterns and triangulations appear with nonzero probability
in statistics for randomly generated fitness landscapes (see Figure \ref{fig:Malvika1}). \hfill \qed

\begin{remark}
There is an overlap between  Theorem \ref{theoremfirst} and
results in  \cite{Seigal2018}, specifically concerning the case with
four peaks (see Corollary 1.4 of  \cite{Seigal2018}).
\end{remark}

\begin{remark}
Theorem \ref{theoremfirst} describes peak patterns
and triangulations that are compatible.
A related question concerns if it is possible  to
combine a particular peak pattern and
triangulation in more than one way.
Classification results on such combinations would
be of interest, but the topic is beyond the scope of this paper.
\end{remark}

\smallskip
\noindent
    {\bf{Open question:}} For the $L=4$-cube there are 20 peak patterns and 235,277 symmetry classes of regular triangulations
    (Table \ref{table:peakpatterns}). Which combinations of peak patterns and triangulations are compatible?

\bigskip

 \subsection{Peaks and shapes of random fitness landscapes} \label{Sec:Statistics}
 Generic random fitness landscapes are obtained by assigning independent, identically and continuously distributed random
 numbers to the genotypes \cite{Kauffman1987}. Because peak patterns and fitness graphs are fully determined
 by the rank order of fitness values, the resulting statistics are independent of the underlying probability distribution.
 A simple rank order argument shows that the expected number of peaks of a random landscape over $L$ loci is $\frac{2^L}{L+1}$,
 and the variance of the number of peaks is also known \cite{Macken1991}. The full distribution of peak patterns for $L=3$
 was obtained in \cite{SK2014} and is reported in Table \ref{tab:summary_peaks}. Additionally the table shows the distribution
 of peak patterns over the 54 isomorphism classes of fitness graphs presented in \cite{Crona2017}. 

 \begin{center}
\begin{table}
\renewcommand{\arraystretch}{1.6}
  \begin{tabular}{l l l}
    \hline
	 Peak pattern  &  Random landscapes & Isomorphism classes  \\ \hline
	        1 peak  &   $\frac{3}{14} \approx 0.214$ & $\frac{1}{2} = 0.5$  \\ 
	        2 peaks at distance 2   &  $\frac{5}{14} \approx 0.357$ & $\frac{5}{18} \approx 0.278$    \\ 
	        2 peaks at distance 3   &   $\frac{1}{4} = 0.25$ & $\frac{4}{27} \approx 0.148$    \\ 
	        3 peaks    &  $\frac{1}{7} \approx 0.143$ & $ \frac{1}{18} \approx 0.055$    \\ 
	        4 peaks (Haldane graph)  &   $\frac{1}{28} \approx 0.036$ & $\frac{1}{54} \approx 0.0185$      \\ \hline
 \end{tabular}
  \caption{Distribution of peak patterns for random three-locus landscapes based on \cite{SK2014} and \cite{Crona2017}. Decimal
    approximations of the exact rational probabilities are provided for convenience. }\label{tab:summary_peaks}
 \end{table}
\end{center}

Figure \ref{fig:Malvika1} shows the distribution of peak patterns conditioned on the triangulation type. 
The combinations of peak patterns and triangulation types excluded by Theorem \ref{theoremfirst} are 
absent in the figure, whereas all other combinations occur with positive probability.
In contrast to the statistics of fitness graphs presented in Table \ref{tab:summary_peaks},
these statistics depend on the probability distribution of
fitness values \cite{Srivastava}. 
Observations similar to  the summary in Figure \ref{fig:Malvika1}, but focused on the number of peaks
rather than peak patterns, appeared first in  \cite{Srivastava}.
The results in Figure  \ref{fig:Malvika1} were obtained using the uniform distribution on $[0,1]$ (see
Figure~\ref{Fig:sublandscapes} for the distribution of triangulation types in this case).
Furthermore, it can be seen that the overall ruggedness of the landscapes (as measured, e.g.,
by the mean number of peaks) decreases systematically going from triangulation type 1 to type 6. A pronounced
trend is observed for the patterns with two peaks: Whereas for triangulation type 1 the two peaks are almost exclusively at
distance 2, for triangulation type 6 the pattern with two peaks at distance 3 dominates. 

\begin{figure}[ht]
\begin{center}
\includegraphics[width=0.8\textwidth]{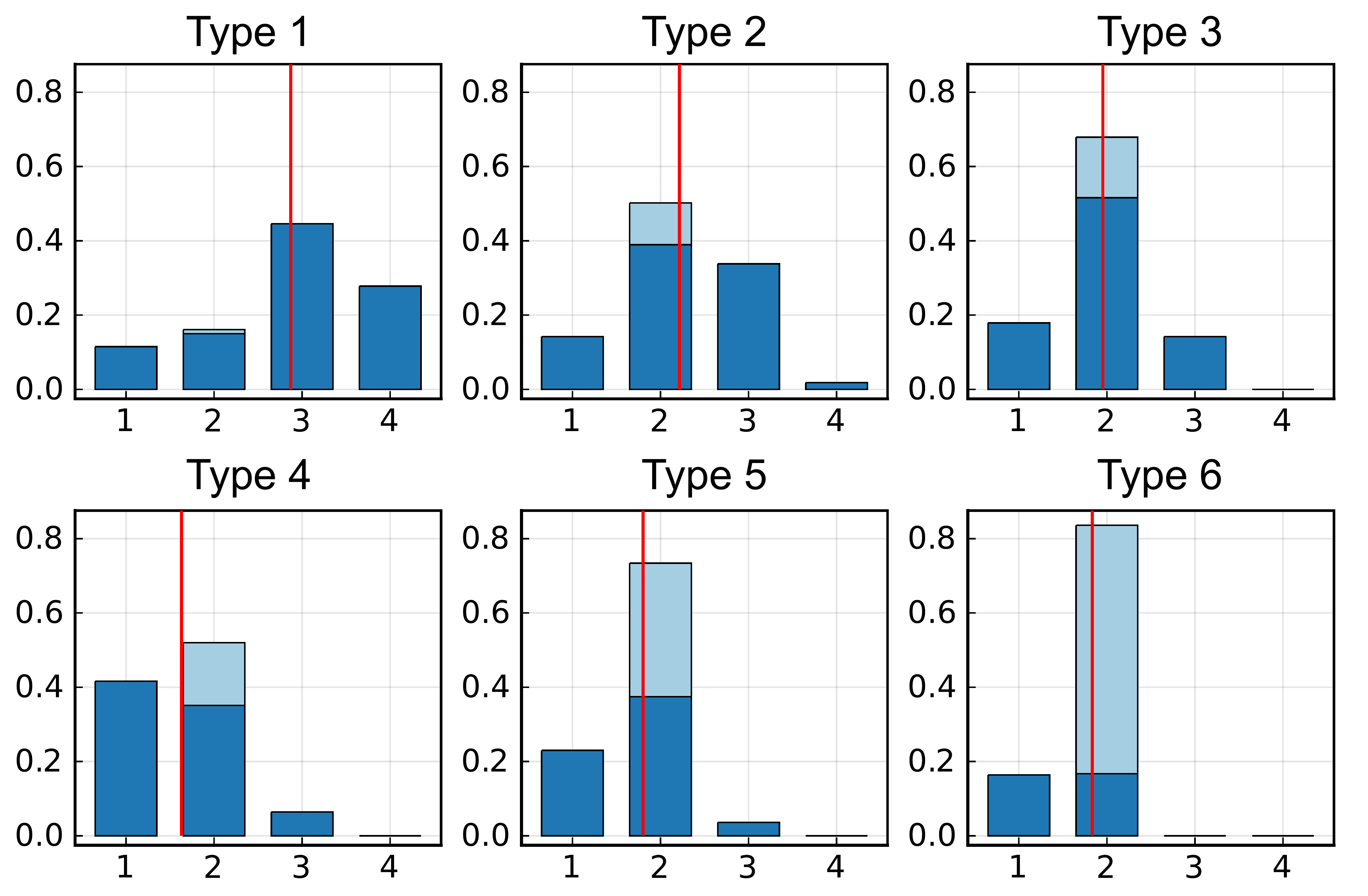}
\end{center}
\caption{Distribution of the number of peaks for different triangulation types.
 The fitness values were taken to be uniformly distributed on $[0,1]$. The vertical red lines indicate the mean of the
distribution. For unconstrained random three-locus landscapes the mean number of peaks is 2. The light-blue portion of the bars
at peak number 2 shows the fraction of landscapes where the two peaks are at distance 3.}
\label{fig:Malvika1}
\end{figure}


\section{Higher dimensional fitness landscapes}
Throughout the following sections, all triangulations are assumed to be regular.
The regularity assumption will not be stated explicitly.

\subsection{Extreme peak patterns and triangulations}
A counting argument shows that the maximal number of peaks for a fitness graph
is $2^{L-1}$. This fact was first observed in \cite{haldane_1931}, and we refer to these graphs as Haldane graphs.
One can verify that each node in a Haldane graph is either
a sink or a source.

\begin{remark}
A  fitness graph has at most  $2^{L-1}$ peaks.
For graphs with the maximal number of peaks, each node
is either a sink (peak) or a source. All neighbors of sinks are sources,
and vice versa.
\end{remark}

Among graphs with only one peak, the all arrows up graph
is defined as the graph where fitness decreases by distance
from the node with maximal fitness (if one draws the graphs 
as in Figure \ref{Fig:two-locus}, all arrows point upward).

\begin{observation}
The all arrows up graph is compatible with all triangulations.
\end{observation}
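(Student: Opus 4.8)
The plan is to use the fact that a regular triangulation depends on its lifting heights only up to the addition of an affine-linear function, and to exploit this freedom to tilt an arbitrary lifting steeply toward the all-ones genotype until every edge points upward. First I would fix an arbitrary regular triangulation $T$ of the cube $[0,1]^L$. By the definition of regularity (and of the induced triangulation given in the introduction), there is a generic height function $w_0 : \{0,1\}^L \to \mathbb{R}$ whose induced triangulation, obtained as the projection of the upper envelope of the lifted points, equals $T$.

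The key input is the standard fact \cite{triangulations} that if $\ell : \mathbb{R}^L \to \mathbb{R}$ is affine-linear, then the liftings $w_0$ and $w_0 + \ell$ induce the same regular subdivision of the cube. Indeed, passing from heights $w_0$ to $w_0 + \ell$ applies the shear $(x,t)\mapsto (x,\,t+\ell(x))$ to the lifted polytope in $\mathbb{R}^{L+1}$; this affine bijection preserves vertical fibers and the notion of "upper face," so it carries the upper envelope to the upper envelope and leaves the projected triangulation unchanged. We therefore obtain an $L$-parameter family of liftings, all realizing $T$.

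Next I would set $w_g = w_0(g) + \lambda \sum_{i=1}^L g_i$ and take $\lambda$ large. Every edge of the cube joins genotypes $g, g'$ differing in a single coordinate, say $g_i = 0$ and $g'_i = 1$, so that $\sum_j g'_j - \sum_j g_j = 1$ and hence $w_{g'} - w_g = \bigl(w_0(g') - w_0(g)\bigr) + \lambda$. Choosing $\lambda > \max_g w_0(g) - \min_g w_0(g)$ forces $w_{g'} > w_g$ on every edge, i.e. each arrow points toward the genotype with more $1$'s. The fitness graph of $w$ is thus the all-arrows-up graph, with unique peak at the all-ones genotype $1\cdots 1$, while by the invariance above $w$ still induces the triangulation $T$.

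Finally I would check that the tilt preserves genericity: since $w_0$ is generic, for each pair $g \neq g'$ the equality $w_0(g) + \lambda\sum_j g_j = w_0(g') + \lambda\sum_j g'_j$ holds for at most one value of $\lambda$, so a generic large $\lambda$ keeps all fitness values distinct while maintaining both the all-arrows-up graph and the triangulation $T$. As $T$ was arbitrary, this shows the all-arrows-up graph is compatible with every triangulation. I expect no serious obstacle: the only points requiring care are the invariance lemma and the genericity bookkeeping, both of which are routine.
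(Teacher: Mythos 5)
Your proposal is correct and follows essentially the same route as the paper: the paper also adds the linear tilt $c\sum_{i=1}^L g_i$ to the heights, notes that $w^{+c}$ and $w$ induce the same triangulation, and takes $c$ large enough to force all arrows up. Your shear argument for the invariance and the genericity bookkeeping merely spell out what the paper leaves to ``one verifies.''
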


\begin{proof}
The argument relies on the following construction:
For a fitness landscape $w$ and a constant $c\in\mathbb N$ let
\[
w_{g}^{+c}=w_g +c \sum_{i=1}^L g_i .
\]
For instance,
$w_{{00}}^{+2} =w_{{00}}$,  $w_{{10}}^{+2} = w_{{10}}+ 2$, $w_{{01}}^{+2}=w_{{01}}+2$, $w_{{11}}^{+2}=w_{{11}}+4$.
Consider an arbitrary triangulation of the $L$-cube induced by $w$.
One verifies that the fitness landscapes $w^{+c}$ and $w$ induce the same triangulation.
For sufficiently large $c$, the fitness graph  corresponding to $w^{+c}$ is an all arrows up graph,
which completes the proof.
\end{proof}

\begin{corollary}
All triangulations are compatible with some single peaked fitness landscapes.
\end{corollary}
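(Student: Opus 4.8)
The plan is to deduce the corollary directly from the preceding Observation, the only additional input being the elementary fact that an all arrows up graph is single peaked. First I would verify this fact: in the all arrows up graph fitness decreases monotonically with Hamming distance from the genotype $g^*$ of maximal fitness, so every edge incident to a genotype $g \neq g^*$ is oriented toward a neighbor strictly closer to $g^*$, which has strictly higher fitness. Consequently $g^*$ is a sink, and it is the only sink, since from any other genotype one can always move to a strictly fitter neighbor. Hence the all arrows up graph has exactly one peak and realizes the single-peak pattern.

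Next I would recall that, throughout this section, the triangulations under consideration are precisely the regular triangulations of the $L$-cube, and that by definition every such triangulation is induced by some generic fitness landscape $w$ (as the projection of the upper envelope of the corresponding lifting). Fixing an arbitrary triangulation $T$, I would choose such a $w$ with shape $T$.

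Finally I would invoke the construction from the proof of the preceding Observation: for a sufficiently large constant $c$, the tilted landscape $w^{+c}$ induces the same triangulation $T$ as $w$, while its fitness graph is all arrows up. By the first step, $w^{+c}$ is single peaked. Thus $T$ is induced by a single-peaked fitness landscape, i.e. $T$ is compatible with the single-peak pattern. Since $T$ was arbitrary, every triangulation is compatible with some single-peaked fitness landscape.

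I do not expect any genuine obstacle here: the substantive content—that the shape is unchanged under the additive tilt $w \mapsto w^{+c}$ and that the graph becomes all arrows up for large $c$—has already been established in the Observation. The only point requiring care, and the closest thing to a \emph{hard part}, is confirming that the all arrows up graph carries exactly one peak, so that the compatibility asserted by the Observation is indeed compatibility with the single-peak pattern rather than with some pattern having more peaks.
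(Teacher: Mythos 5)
Your proposal is correct and follows essentially the same route as the paper: the corollary is an immediate consequence of the preceding Observation, whose proof via the tilted landscape $w^{+c}$ you correctly recall, together with the (trivial but worth stating) fact that the all arrows up graph has exactly one peak. Your extra care in verifying the single-peak property and in noting that every regular triangulation is induced by some generic landscape makes explicit what the paper leaves implicit, but introduces no new ideas.
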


Type 1 and type 6 triangulations can be considered opposite extremes for $L=3$.
The first triangulation slices off four genotypes, i.e., it has four corner
simplices, and the latter does not slice off any genotypes. 
For general $L$, a corner simplex consists of a vertex $g$ (the sliced
off vertex) and its $L$ neighbors \cite{triangulations}.
Analogous to the $L=3$ case, 
 corner-cut triangulations are defined as 
 triangulations that have   $2^{L-1}$  corner simplices, i.e., the
 highest number of corner simplices for a given $L$  \cite{SCOTTMARA1976170}.
Corner-cut triangulations are unique  for $L=3$ and $L=4$
but not for general $L$ \cite{triangulations}.

The  staircase triangulation for general $L$ is a direct generalization of 
the type 6 triangulations. For notational convenience
we define the standard staircase triangulation so that
each simplex contains the genotypes from a walk of length $L$ from 
 $00 \dots 0$ to $11\dots1$.
 All $L!$ walks are represented.
For instance, for $L=3$ the triangulation is 
\begin{align*}
&\{000, 100, 110, 111\}, \{000, 100, 101, 111\}, \{000, 010, 110, 111\}, \\
& \{000, 010, 011, 111\} , \{000, 001, 101, 111\}, \{000, 001, 011, 111\}.
\end{align*}

 \begin{definition}\label{staircase}
The  standard staircase triangulation  consists of  $L!$ simplices,
such that each simplex contains all genotypes from
a walk of length $L$, starting at  $00 \dots 0$ (the zero-string) 
and ending at the one-string $11\dots 1$, where the number of $1$'s increases in each step. 
Any triangulation that is isomorphic
to the standard staircase triangulation is
referred to as a staircase triangulation.
\end{definition}
In particular, there are four isomorphic staircase triangulations
for $L=3$ since the pair of vertices that belong to all tetrahedra
can be chosen as $\{000, 111\}$,
$\{100,  011\}$, $\{010,  101\}$ or $\{001, 110 \}$.

Also for general $L$, it is useful to consider the induced
triangulations of 2-faces.
Diagonals connect peaks on the 2-faces for
triangulations compatible with Haldane graphs.
It follows that the source genotypes are isolated.
However, they are not sliced off in general (Remark \ref{remark:slicedoff}).

\begin{remark}\label{source}
The source genotypes are isolated for Haldane graphs.
\end{remark}

Consider the standard staircase triangulation for $L\geq 3$. 
For any genotype $g$, let  $g'$ be a genotype
obtained by changing two loci (positions) from $1$ to $0$, 
or two loci from $0$ to $1$. Then  $g$ and $g'$ belong to the same simplex, 
which implies that $g$ is not isolated.

\begin{remark}
  \label{remark:staircase}
If the fitness landscape induces a staircase triangulation, then there are no isolated genotypes.
\end{remark}

\begin{observation}
Haldane graphs impose restrictions on triangulations
which make them incompatible with some triangulation types.
The graphs are compatible with other well studied triangulation types.

\begin{enumerate}
\item[(i)]
Triangulations that are compatible with Haldane graphs have $ 2^{L-1} $ isolated genotypes.
\item[(ii)]
Haldane graphs are incompatible with staircase triangulations for $L \geq 3$.
\item[(iii)]
For any $L$, there exists a corner cut triangulation that is  compatible with the Haldane graph.
\end{enumerate}
\end{observation}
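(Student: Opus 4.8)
The plan is to derive all three parts from a single structural fact. By the remark on maximal peak patterns above, the sinks and sources of a Haldane graph alternate, so — the cube being connected and bipartite — the peaks form one parity class $P$ and the sources the complementary class $S$, with $|P| = |S| = 2^{L-1}$. Every edge of the cube joins a vertex of $P$ to a vertex of $S$, and in a Haldane graph it points toward the $P$-endpoint. On any $2$-face the two $P$-vertices are a diagonal pair, and since all four of its edges point toward them, the face exhibits reciprocal sign epistasis; Observation~\ref{observation:diagonals} then forces the induced diagonal of \emph{every} $2$-face to join the two peaks. This is the engine for (i) and (ii).

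For part (i) I would show that the isolated genotypes are exactly the sources. A source $s \in S$ lies on several $2$-faces, but on each of them the induced diagonal joins the two $P$-vertices and hence misses $s$, so no induced diagonal is incident to $s$ and $s$ is isolated (this recovers Remark~\ref{source}). Conversely a peak $p \in P$ lies on at least one $2$-face $F$ (here $L \geq 2$), and on $F$ the induced diagonal joins $p$ to the opposite $P$-vertex $p'$ with $\|p - p'\| = 2$, so $p$ is not isolated. Since being isolated is intrinsic to the triangulation, any triangulation compatible with a Haldane graph has isolated set equal to $S$, of size $2^{L-1}$. Part (ii) is then immediate: for $L \geq 3$, Remark~\ref{remark:staircase} gives a staircase triangulation \emph{no} isolated genotypes, whereas (i) gives a Haldane-compatible triangulation $2^{L-1} \geq 4$ of them, so no landscape can induce both. (The hypothesis $L \geq 3$ is exactly what makes Remark~\ref{remark:staircase} applicable; for $L = 2$ the standard staircase already realizes the Haldane diagonal $00$–$11$, so the restriction is genuine.)

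For part (iii) I would exhibit an explicit landscape. Take $P$ to be the even-weight vertices and set $w_g = \varepsilon_g$ for $g \in P$ and $w_g = -c + \varepsilon_g$ for $g \in S$, with $c$ large and the $\varepsilon_g$ small generic perturbations. Every even vertex then exceeds all its odd neighbours, so the fitness graph is the Haldane graph with peak set $P$. To identify the induced triangulation I would use the decomposition of the cube as the union of the demihypercube $\mathrm{conv}(P)$ and the $2^{L-1}$ corner simplices $\mathrm{conv}(\{s\} \cup N(s))$, one per source $s \in S$ (for $L = 3$ this is the type~1 picture: a central tetrahedron plus four corners). For large $c$ the supporting hyperplane of each corner simplex passes through $(s,-c)$ and its even neighbours near height $0$, rising above $0$ away from $s$, hence lying above every lifted genotype; so each corner simplex is an upper face and each source is sliced off (and therefore isolated, Remark~\ref{remark:slicedoff}). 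The generic values on $P$ triangulate the leftover demihypercube. The resulting regular triangulation slices off all $2^{L-1}$ sources — the maximal number — so it is corner-cut and, by construction, compatible with the Haldane graph.

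The main obstacle is this last step: verifying rigorously that for large $c$ all $2^{L-1}$ corner-simplex hyperplanes are simultaneously upper-supporting and that the complement of the corners is \emph{exactly} the demihypercube, with no interior genotype interfering. I would close this with the volume/decomposition bookkeeping (the $2^{L-1}$ corner simplices of volume $1/L!$ together with $\mathrm{conv}(P)$ exhaust the unit cube) combined with the hyperplane estimate, which makes the corner simplices upper faces once $c$ exceeds an explicit threshold depending only on $L$ and the perturbation size.
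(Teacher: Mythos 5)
Your proposal is correct and takes essentially the same route as the paper: you re-derive Remark \ref{source} (reciprocal sign epistasis on every 2-face of a Haldane graph forces the induced diagonal to join the two peaks, via Observation \ref{observation:diagonals}), obtain (ii) from (i) together with Remark \ref{remark:staircase}, and for (iii) use the same two-level landscape on the parity classes that the paper uses ($w_g \approx 2$ on peaks, $\approx 1$ on sources; your $\varepsilon_g$ versus $-c+\varepsilon_g$ is the same construction up to an affine shift, though you should add $c$ throughout to respect the codomain $\mathbb{R}_{\geq 0}$). Your additions---proving that peaks are never isolated, so the isolated set is \emph{exactly} the source class, and the explicit hyperplane estimate (the supporting affine function sits at height $\approx (k-1)c$ over vertices at distance $k \geq 2$ from the source, far above all lifted genotypes) certifying each corner simplex as an upper face---are sound and in fact supply the verification that the paper's one-line ``one can verify'' leaves to the reader, so the ``main obstacle'' you flag at the end is already closed by your own computation.
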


\begin{proof}
(i) follows from Remark \ref{source}.
(ii) follows from (i) and Remark \ref{remark:staircase}.
For (iii), consider a generic fitness  landscape that induces a Haldane graph such that  $w_g \approx 2$ for half of the genotypes 
and  $w_g \approx 1$ for the remaining genotypes. One can verify that the fitness landscape induces a  corner-cut triangulation
provided the approximations are sufficiently close (each source genotype is sliced off).
\end{proof}

We have established that a Haldane graph is compatible with 
some corner-cut triangulation. A stronger claim would be that
any corner-cut triangulation is compatible with the Haldane graph.

\smallskip
\noindent
{\bf{Open question:}} Are all corner-cut triangulations compatible with Haldane graphs?

\subsection{Staircase triangulations and universal positive epistasis}\label{subsect:staircase}
It is sometimes convenient to use the set notation for genotypes \cite{Das2020}. 
If the bitstring representing the genotype has a 1 in position $i$, then the set contains the element
$i \in {\mathcal{L}}$ of the locus set ${\mathcal{L}} = \{1,2,\dots,L\}$.
For $L=2$, the translation is $00=\emptyset, 10=\{1\}, 01=\{2\}, 11=\{1,2\}$.

We consider the standard staircase triangulation as previously defined
(Definition \ref{staircase}).
For a pair of genotypes $g, g'$, 
let
 \[
 X_{g,g'}(i)=\frac{1}{2} (g_i + g'_i).
 \] 
 In the terminology of \cite{BPS:2007},
 $X_{g,g'}$ is the allele frequency 
vector of a population composed
in equal part of $g$ and $g'$, and
such a population has fitness
$
\frac{1}{2} ( w_g + w_{g'} ).
$

\begin{observation}
Assume that $w$ induces the standard staircase triangulation.
Let  $g, g'$ be two genotypes where $g' \subset g$.
Consider all pairs of genotypes $\{h, h'\}$
such that   
\[
X_{h, h'}=X_{g,g'}.
\]
Then
\[
w_g + w_{g'} \geq w_h+w_{h'},
\]
with equality only if the pairs $\{h, h'\}$ and $\{g,g' \}$ are identical.
\end{observation}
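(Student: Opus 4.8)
The plan is to recast the claim in the language of the piecewise linear envelope $\tilde w$ from the introduction, for which the staircase triangulation was defined in the first place. First I would pin down the competing pairs. Writing $g'\subset g$ as a set relation, the vector $X_{g,g'}(i)=\frac{1}{2}(g_i+g'_i)$ equals $0$ on $S_0=\{i:g_i=g'_i=0\}$, equals $1$ on $S_1=g'$, and equals $\frac{1}{2}$ on $S_{1/2}=g\setminus g'$. A pair $\{h,h'\}$ satisfies $X_{h,h'}=X_{g,g'}$ exactly when $h$ and $h'$ agree with $g'$ on $S_0\cup S_1$ and split $S_{1/2}$ complementarily, i.e. $h=g'\cup A$ and $h'=g'\cup(S_{1/2}\setminus A)$ for some $A\subseteq S_{1/2}$; the choice $A=S_{1/2}$ (equivalently $A=\emptyset$) recovers $\{g,g'\}$. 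In particular all these pairs have the common midpoint $X_{g,g'}$ and Hamming distance $|S_{1/2}|$.

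The key structural observation is that, because $g'\subset g$, the pair $\{g,g'\}$ is comparable and hence is an edge (a $1$-simplex) of the standard staircase triangulation: any chain obtained by adding one locus at a time, $\emptyset\subset\cdots\subset g'\subset\cdots\subset g\subset\cdots\subset\{1,\dots,L\}$, is one of the defining walks of Definition \ref{staircase} and it contains both $g'$ and $g$. Now I would invoke the defining property of the induced triangulation. Since $w$ induces this triangulation and $[g,g']$ is one of its edges, $\tilde w$ is affine along the segment from $g'$ to $g$, so at the midpoint
\[
\tilde w(X_{g,g'})=\frac{1}{2}\bigl(w_g+w_{g'}\bigr).
\]
On the other hand, for any competing pair the balanced population $p=\frac{1}{2}(\delta_h+\delta_{h'})$ satisfies $\rho(p)=X_{h,h'}=X_{g,g'}$ and $p\cdot w=\frac{1}{2}(w_h+w_{h'})$, so it is feasible in the maximization defining $\tilde w(X_{g,g'})$. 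Hence $\frac{1}{2}(w_h+w_{h'})\le \tilde w(X_{g,g'})=\frac{1}{2}(w_g+w_{g'})$, which is the asserted inequality after multiplying by $2$.

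The remaining point, strictness of the inequality, is the step I expect to be the main obstacle. Since $g\neq g'$, the point $X_{g,g'}$ lies in the relative interior of the edge $[g,g']$ of the triangulation, so the minimal face of the lifted upper hull over it is precisely the segment joining $(g,w_g)$ and $(g',w_{g'})$. Any population $p$ attaining the maximum $\tilde w(X_{g,g'})$ lifts to a point of the upper hull over $X_{g,g'}$ at height $\tilde w(X_{g,g'})$, and a point in the relative interior of a face of a polytope is a convex combination only of vertices lying in that face; therefore $p$ is supported on $\{g,g'\}$, and matching allele frequencies forces $p=\frac{1}{2}(\delta_g+\delta_{g'})$. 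Thus the maximizer is unique, which is exactly where genericity (regularity of the triangulation) enters, so for every pair $\{h,h'\}\neq\{g,g'\}$ the feasible population $\frac{1}{2}(\delta_h+\delta_{h'})$ is not optimal and the inequality is strict.

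A purely combinatorial alternative would iterate the two-locus diagonal rule of Observation \ref{observation:diagonals} across the $2$-faces encountered while moving one differing locus at a time from the smaller to the larger side of the split, but tracking the induced diagonal on each intermediate $2$-face is cumbersome, whereas the envelope argument treats all distances $|S_{1/2}|$ uniformly; I would therefore present the $\tilde w$ proof as the main one.
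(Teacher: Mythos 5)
Your proposal is correct and takes essentially the same route as the paper: you identify that every competing pair $\{h,h'\}$ satisfies $h\cup h'=g$ and $h\cap h'=g'$, observe that $\{g,g'\}$ spans a face of the staircase triangulation (a chain) while any non-identical competitor does not, and conclude via the fittest-population property of the triangulation induced by $w$. The only difference is that where the paper simply cites \cite{BPS:2007} for this last step, you prove it inline through the upper-hull and relative-interior argument, which in particular supplies the strictness of the inequality correctly, so there is no gap.
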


\begin{proof}
Assume that  $g'\subset g$.
Then $g$ and $g'$ belong to the
same simplex in the standard staircase
triangulation.
Suppose that $\{h,h'\}$ is a pair of
genotypes such that
$X_{g,g'}=X_{h,h'}$.
It is easy to verify that 
$h' \cup h=g$ and $h' \cap h=g'$.
Unless the pair $\{h', h\}$ is identical to  $\{ g, g' \}$, one can 
exclude that $h' \subset h$ or $h \subset h'$,
which  shows that  $h$ and $h'$ 
do not belong to the same simplex
in the triangulation.
From these observations
and properties of the fittest population
described  in \cite{BPS:2007}
the result is immediate.
\end{proof}

To make contact with the usual definition of epistasis \cite{Domingo2019,Krug2021,Poelwijk2016},
let  
$b' \subset b \subset \mathcal{L} $, and let
$s \subset  \mathcal{L} \setminus b$.
The previous observation
applied to  the genotypes $g=b \cup s$ and $g'=b'$ implies that
\begin{align}\label{eq:posep}
	w_{b \cup s} - w_{b} \geq w_{b' \cup s} - w_{b'}.
\end{align}
In this formulation a set $s$ of loci is added to two different backgrounds, 
$b$ and $b'$, and the fitness effect of this substitution is larger in $b$ than in $b'$ if $b' \subset b$.

\begin{definition}[\emph{Universal positive epistasis.}]\label{universal}
	A fitness landscape displays universal positive epistasis if 
	for any genotypes  $b' \subset b \subset \mathcal{L} $ and
	set $s \subset  \mathcal{L} \setminus b $,
	the inequality \eqref{eq:posep} holds.
\end{definition}

Equivalently, universal positive epistasis holds if the condition (\ref{Positive_epistasis})
is satisfied for any two genotypes $g$ and $g'$
or, if genotypes are represented as strings
\[
w_{g \lor g'} + w_{g \wedge g'} \geq  w_{g} + w_{g'}.
\]
Informally, for a given allele frequency vector, one can construct the pair of genotypes with maximal
fitness sum by distributing the 1's as unevenly as possible. For instance
\[
w_{111}+w_{000}> w_{110} + w_{001},  \, w_{101}+w_{010}, \,  w_{011}+w_{100} .
\]
Note that the allele frequency vector is $[1/2, 1/2, 1/2]$ for the four pairs of genotypes,
and that all of them except $\{111, 000\}$ consists of two genotypes that belong to different
tetrahedra in the triangulation.
Conversely, a fitness landscape with 
universal positive epistasis  induces the standard
staircase triangulation (see Observation \ref{positive}).

\begin{remark}
The universal positive epistasis condition is a characterization of  
 the standard staircase triangulation.
\end{remark}

 \subsection{Peak patterns for the staircase triangulation}
In this section single, double and triple mutants, and similarly, refer to
the cardinality of the sets used in the set notation for genotypes. The zero-string
is represented by the empty set.

 In the case of $L=3$, fitness landscapes compatible with the
 staircase triangulation were found to have few peaks, and here we will show that this observation can be extended
 to general $L$. In particular, local conditions suffice to ensure that the landscape is single peaked.

 \begin{observation} For fitness landscapes that induce the standard staircase triangulation
   the following statements hold: \\
   (i) If the zero-string is a local fitness minimum, it is also the global minimum and the fitness graph has all arrows up. \\
   (ii) If at most one of the $L$ neighbors of the zero-string has lower fitness, the fitness landscape is single-peaked. Moreover,
   the peak is at most at distance 1 from the one-string.
 \end{observation}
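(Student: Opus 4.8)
The plan is to reduce both parts to the monotonicity inequality \eqref{eq:posep}, which is available because inducing the standard staircase triangulation is equivalent to universal positive epistasis (Definition \ref{universal} and the subsequent remark). The single computation I rely on throughout is the following consequence of \eqref{eq:posep} with $b'=\emptyset$: for any background $b$ and any locus $i\notin b$,
\[
w_{b\cup\{i\}}-w_b \;\geq\; w_{\{i\}}-w_\emptyset .
\]
In words, the fitness effect of switching locus $i$ on is smallest on the empty background, so the sign of $w_{\{i\}}-w_\emptyset$ alone already forces the orientation of every edge in direction $i$ whenever that sign is positive.

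For part (i) I would assume $\emptyset$ is a local minimum, so that $w_{\{i\}}-w_\emptyset>0$ for every $i\in\mathcal{L}$. The displayed inequality then gives $w_{b\cup\{i\}}>w_b$ for every edge of the cube, i.e.\ every arrow points toward the genotype with more ones; this is exactly the all-arrows-up graph. Joining $\emptyset$ to an arbitrary genotype $g$ by a monotone chain that adds one locus at a time yields a strictly increasing sequence of fitness values, whence $w_g>w_\emptyset$ for all $g\neq\emptyset$, so $\emptyset$ is the global minimum.

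For part (ii) I would suppose at most one neighbor of $\emptyset$ lies below it; if there are none we are back in case (i), so assume the unique down-neighbor is $\{1\}$, i.e.\ $w_{\{i\}}>w_\emptyset$ for all $i\neq 1$ while $w_{\{1\}}<w_\emptyset$. For every $i\neq 1$ the displayed inequality still gives $w_{b\cup\{i\}}>w_b$ on all backgrounds, so any genotype missing some locus $i\neq 1$ has a strictly fitter neighbor and cannot be a peak. Consequently every peak contains $\{2,\dots,L\}$, leaving only the two candidates $\mathcal{L}$ and $\mathcal{L}\setminus\{1\}$, both within Hamming distance $1$ of the one-string. These two are adjacent (they differ only in locus $1$), so the single edge between them points one way; checking that the higher endpoint beats each of its remaining neighbors (all of which differ from it in some locus $i\neq 1$ and are therefore lower by the displayed inequality) shows that exactly one of the two candidates is a peak. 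Since the fitness graph is acyclic it has at least one sink, so precisely one peak exists and it lies at distance at most $1$ from $\mathcal{L}$.

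The step I expect to require the most care is the down-direction in part (ii): a priori the arrow along locus $1$ could reverse several times as the background grows and thereby create additional peaks away from the top of the cube. The resolution is that the up-directions $i\neq 1$ already confine every peak to the two nodes $\mathcal{L}$ and $\mathcal{L}\setminus\{1\}$, so the detailed threshold behaviour along locus $1$ never has to be analysed --- although \eqref{eq:posep} does guarantee that the effect of locus $1$ is monotone in the background and hence changes sign at most once, which would give the same conclusion. A minor point is to invoke acyclicity (from genericity) so that a sink is guaranteed, ruling out the degenerate possibility that neither candidate is a peak.
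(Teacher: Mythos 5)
Your proposal is correct and follows essentially the same route as the paper: both arguments apply \eqref{eq:posep} with $b'=\emptyset$ to conclude that any locus beneficial on the empty background is beneficial on every background, which in (i) forces all arrows up (and hence $\emptyset$ globally minimal) and in (ii) confines the peak candidates to the one-string and its neighbor lacking the exceptional locus, of which at most one can be a peak. Your explicit verification that exactly one candidate is in fact a sink (the paper leaves existence implicit) is a harmless refinement, not a different method.
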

 \begin{proof} (i) By assumption $w_i > w_\emptyset$ for all $i \in {\mathcal{L}}$. Since $\emptyset$ is a subset of all genotypes,
 it follows from (\ref{eq:posep}) that $w_{b \cup i} - w_b > 0$ for all background genotypes $b$. Thus all mutations are beneficial
 on all backgrounds and the fitness graph has all arrows up.  \\
 (ii) Suppose $w_j < w_\emptyset$ for some $j \in {\mathcal{L}}$
 and $w_i > w_\emptyset$ for all $i \neq j$. Then $w_{b \cup i} - w_b > 0$ for all $i \neq j$. This implies that a peak genotype
 can have a zero entry only at position $j$, because otherwise the fitness could be increased by changing the entry to 1.
 The possible peak genotypes are the one-string and its neighbor with a zero entry at position $j$. They cannot both
 be peaks. 
 \end{proof}

The following result is useful for counting peaks.

\begin{lemma}\label{prop}
Let $w$ be a fitness landscapes that induces the standard staircase triangulation.
\begin{itemize}
\item[(i)]
If a single mutant $\{ i \}$ is a peak, then all  other peaks contain $\{ i \}$.
In particular, two single mutants cannot both be peaks.
\item[(ii)]
Two double mutants $\{ i,j \}$ and $\{ j, k \}$ that overlap cannot both be peaks.
\item[(iii)]
A triple mutant $\{i,j,k\}$ and a double mutant $\{k,l\}$ that overlap cannot both be peaks.
\end{itemize}
\end{lemma}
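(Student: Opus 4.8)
The plan is to reduce all three parts to a single mechanism extracted from the universal positive epistasis inequality \eqref{eq:posep}, which is available because $w$ induces the standard staircase triangulation. Recall that in set notation a genotype $g$ is a peak (sink) exactly when every single bit flip lowers fitness, so that $w_g > w_{g \cup \{m\}}$ for every $m \notin g$ and $w_g > w_{g \setminus \{m\}}$ for every $m \in g$. The engine I would isolate first reads as follows: if $g$ is a peak, $m \notin g$, and there is a subset $b' \subset g$ with $m \notin b'$ such that the single mutation $m$ is beneficial on $b'$, i.e. $w_{b' \cup \{m\}} > w_{b'}$, then applying \eqref{eq:posep} with $b = g$ and $s = \{m\}$ (legitimate since $m \notin g$ and $b' \subset g$) gives
\[
w_{g \cup \{m\}} - w_{g} \;\geq\; w_{b' \cup \{m\}} - w_{b'} \;>\; 0,
\]
so $w_{g \cup \{m\}} > w_g$, contradicting that $g$ is a peak. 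In words, a mutation that is beneficial on some background contained in $g$ cannot be deleterious on $g$ itself, hence cannot be absent at a peak. Each part then becomes a matter of exhibiting the right mutation $m$ and the right smaller background $b'$, with the beneficial-on-$b'$ hypothesis supplied by one of the peak conditions of the \emph{other} genotype.

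For (i) I would let $\{i\}$ be a peak and take an arbitrary peak $g$ with $i \notin g$ to play the role of $g$ in the engine, with $m = i$ and $b' = \emptyset$. Since $\{i\}$ is a peak, flipping $i$ lowers fitness, so $w_{\{i\}} > w_\emptyset$, which is exactly the statement that $i$ is beneficial on $b' = \emptyset \subset g$; the engine then contradicts $g$ being a peak (and in particular rules out $g = \emptyset$). Thus every peak contains $i$, and no two distinct single mutants can both be peaks. For (ii), with $\{i,j\}$ and $\{j,k\}$ both peaks and $i,j,k$ distinct, I take $g = \{i,j\}$, $m = k$ (so $k \notin \{i,j\}$) and $b' = \{j\} \subset \{i,j\}$. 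Because $\{j,k\}$ is a peak, flipping $k$ gives $w_{\{j,k\}} > w_{\{j\}}$, so $k$ is beneficial on $b' = \{j\}$, and the engine forces $w_{\{i,j,k\}} > w_{\{i,j\}}$, contradicting that $\{i,j\}$ is a peak.

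For (iii) I would split on whether $l$ is a fresh locus. If $l \notin \{i,j,k\}$, I set $g = \{i,j,k\}$, $m = l$ and $b' = \{k\} \subset \{i,j,k\}$; since $\{k,l\}$ is a peak, $w_{\{k,l\}} > w_{\{k\}}$, so $l$ is beneficial on $\{k\}$ and the engine contradicts that $\{i,j,k\}$ is a peak. If instead $l \in \{i,j\}$, then $\{k,l\} \subset \{i,j,k\}$ and the two genotypes are cube neighbors (differing only in the unique locus of $\{i,j\} \setminus \{l\}$), and two adjacent vertices can never both be sinks, so no epistasis input is needed. Since each individual computation is a single line, the real work is bookkeeping: checking in every case that $m \notin g$, that $b' \subset g$ with $m \notin b'$, and that the beneficial-on-$b'$ inequality coincides with a genuine peak condition of the companion genotype. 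The one genuine subtlety I anticipate is precisely the case distinction in (iii): one must notice that the word \emph{overlap} permits $\{k,l\}$ to sit inside the triple, where the staircase argument degenerates and is replaced by the trivial adjacency observation. Everything else follows by a direct reading of \eqref{eq:posep}.
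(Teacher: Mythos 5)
Your proposal is correct and takes essentially the same route as the paper: in each part you apply the universal positive epistasis inequality (in its difference form \eqref{eq:posep}, which is just an algebraic rearrangement of the sum form \eqref{Positive_epistasis} that the paper invokes) to exactly the same genotype pairs, combining it with one peak condition of the companion genotype to contradict the other. The only addition is your explicit case $l \in \{i,j\}$ in (iii), which the paper implicitly excludes by taking $i,j,k,l$ distinct, and which you correctly dispose of via the trivial observation that adjacent vertices cannot both be sinks.
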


\begin{proof}
(i): Suppose that $\{ i \}$ and $g$ are peaks, where $i  \notin g$.
Then
$
w_g>w_{g \cup  \{ i \}}
$
and
$
w_i> w_\emptyset
$.
It follows that
\[
w_{g \cup \{ i \}} + w_\emptyset <
w_i +w_g,
\]
which contradicts \eqref{Positive_epistasis}.

(ii): Two double mutants $\{ i,j \}$ and $\{ j, k \}$ that overlap cannot both be peaks.

Suppose that $w_{ij}$ and $w_{ jk}$  are peaks,
then 
$
w_{ij}> w_j
$
and
$
w_{jk}> w_{i jk}.
$
It follows that
\[
w_{i j k }+w_{ j }  < w_{i  j} + w_{j k}
\]
which contradicts \eqref{Positive_epistasis}.

(iii): A triple mutant $\{i,j,k\}$ and a double mutant $\{k,l\}$ that overlap cannot both be peaks.

Suppose that $\{ i,j,k \}$  and $\{k, l \}$ are both peaks. Then $w_{ijkl}<w_{ijk}$
and $w_{k}<w_{kl}$.
It follows that
\[
w_{ijkl}+w_{k}  < w_{ijk} + w_{kl}
\]
which contradicts \eqref{Positive_epistasis}.
\end{proof}


\begin{theorem}\label{theoremsecond}
If a fitness landscape induces the staircase triangulation, then
there are at most four peaks if $L=4 \text{ or }5$.
\end{theorem}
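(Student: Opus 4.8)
The plan is to extract from Lemma~\ref{prop} a single separation principle valid for all $L$, and then reduce the theorem to a purely combinatorial extremal question about set systems that can be settled, for $L=4$ and $L=5$, by a short analysis organized by cardinality. Since the number of peaks and the staircase property are both preserved by the cube symmetry carrying a general staircase to the standard one, I would first reduce to the standard staircase, so that \eqref{Positive_epistasis} is available.

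The key lemma I would prove is: if $w$ induces the standard staircase triangulation and $g\neq h$ are both peaks, then $|g\setminus h|\geq 2$ and $|h\setminus g|\geq 2$. This is the common engine behind all three parts of Lemma~\ref{prop}. Indeed, suppose $h\setminus g=\{x\}$ is a singleton. Then $g\cup h=g\cup\{x\}$ is a neighbor of $g$ lying above it, so $w_g>w_{g\cup h}$ because $g$ is a peak; and $g\cap h=h\setminus\{x\}$ is a neighbor of $h$ lying below it, so $w_h>w_{g\cap h}$ because $h$ is a peak. Adding these gives $w_g+w_h>w_{g\cup h}+w_{g\cap h}$, contradicting \eqref{Positive_epistasis}. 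By symmetry $|g\setminus h|\neq 1$ as well. Parts (i)--(iii) of Lemma~\ref{prop} are exactly the special cases where one of the two peaks is a singleton, an overlapping doubleton, or a doubleton overlapping a triple.

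Thus the peak set $\mathcal{P}$ is a family of subsets of $\mathcal{L}=\{1,\dots,L\}$ in which no two distinct members $g,h$ satisfy $|g\setminus h|=1$. I would record the consequences layer by layer, grouping genotypes by cardinality $k$: at most one peak of cardinality $0$, $1$, $L-1$ and $L$ each; at most two peaks of cardinality $2$ (they must be pairwise disjoint, i.e.\ form a matching) and, dually, at most two of cardinality $L-2$; plus the cross-layer incompatibilities, e.g.\ a cardinality-$1$ and a cardinality-$2$ peak can never coexist, and peaks on adjacent layers are heavily constrained. Here complementation $g\mapsto\mathcal{L}\setminus g$ is a symmetry, since it preserves \eqref{Positive_epistasis} and maps layer $k$ to layer $L-k$, which halves the bookkeeping. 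For intuition, the bound $4$ arises because a family of unions of blocks from a partition of $\mathcal{L}$ into parts of size $\geq 2$ is automatically separated, and for $L=4,5$ at most two such blocks fit, giving $2^2=4$.

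For $L=4$ the layers are $0,1,2,3,4$. If some cardinality-$2$ peak is present it rules out all cardinality-$1$ and cardinality-$3$ peaks, leaving only layers $0,2,4$ with at most $1+2+1=4$ members; if no cardinality-$2$ peak is present, the surviving layers $0,1,3,4$ contribute at most one member from $\{0,1\}$ and at most one from $\{3,4\}$, hence at most $2$. Either way $|\mathcal{P}|\leq 4$. For $L=5$ the same strategy applies but the case analysis is longer; organizing it by whether the singleton layer and the cardinality-$4$ layer are occupied, the only case that threatens the bound is layers $0,2,3,5$, where the per-layer maxima sum naively to $1+2+2+1=6$. The crux is that the cardinality-$2$ and cardinality-$3$ layers cannot both be filled: such peaks must be disjoint, and in a $5$-set two disjoint cardinality-$2$ peaks leave no cardinality-$3$ set disjoint from both, while a single cardinality-$2$ peak admits a unique disjoint cardinality-$3$ set. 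I expect this interaction, together with the within-layer bound that at most two $3$-subsets of a $5$-set can pairwise intersect in a single element, to be the only genuinely delicate step; everything else is immediate from the separation principle. Once these interactions are quantified one obtains $|\mathcal{P}|\leq 4$ in every case. Note that only the upper bound is needed, sharpness for $L=4$ being established separately.
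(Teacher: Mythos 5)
Your proposal is correct in substance and follows essentially the same route as the paper. The paper proves Lemma~\ref{prop}(i)--(iii) by exactly the addition trick you describe (two peak inequalities summed against \eqref{Positive_epistasis}), and its proof of Theorem~\ref{theoremsecond} is the same cardinality-layer bookkeeping you propose, using complementation symmetry and, for $L=5$, a short enumeration of the extremal layer patterns (all with 4 peaks). Your one genuine refinement is to unify (i)--(iii) into a single separation principle; this is cleaner and also covers directly some pairs the paper handles only implicitly via symmetry, such as two triple-mutant peaks sharing two loci when $L=5$.

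One statement needs repair, though the repair is already implicit in your own text. Your key lemma as written --- distinct peaks $g,h$ satisfy $|g\setminus h|\geq 2$ \emph{and} $|h\setminus g|\geq 2$ --- is false: nested peaks with $h\subsetneq g$ and $|g\setminus h|\geq 2$ are perfectly possible. Indeed the paper's own sharpness example for $L=4$ has the chain of peaks $\emptyset\subset\{1,2\}\subset\{1,2,3,4\}$, which your stated lemma would forbid, so taken literally it would contradict the sharpness of the bound. What your addition argument actually proves --- and what you correctly state and use from the next sentence onward --- is only that neither difference can have size exactly one: $|g\setminus h|\neq 1$ and $|h\setminus g|\neq 1$, with size zero allowed. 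Fortunately your entire layer analysis relies only on this weaker, correct form (your $L=4$ count of $1+2+1$ over layers $0,2,4$ explicitly permits nested peaks, and your $L=5$ cruxes --- a double and a triple peak must be disjoint, hence complementary in a $5$-set; two triple peaks meet in exactly one locus; at most two triples can pairwise do so --- all follow from it), so once the lemma statement is corrected the proof goes through as you outline it.
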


\begin{proof}
The argument relies on the properties (i)-(iii) described in Lemma \ref{prop}.

{\emph{Case L=4:}}
If  a single mutant ${\{ i \} }$ is a peak, then 
other peaks would have to be of the form ${ \{ i, j, k \} }$ or ${ \{ i, j, k, l } \}$
by (i). Consequently, there are at most two peaks.
If a triple mutant is a peak, then there are at most two peaks by symmetry.

It only remains to consider the case when no single or 
triple mutants are peaks. If a double mutant ${ \{ i, j \}}$ is peak, then
other peaks would have to be of the types
$\emptyset$, $ \{ k, l \} $ or $ \{ i, j, k, l \}$ by (ii).
Consequently, there are at most four peaks, which completes the
argument for $L=4$.

{\emph{Case L=5:}}
The argument is similar to the case $L=4$. We subdivide the hypercube into layers of genotypes of the same cardinality $k$,
where $0 \leq k \leq L$. Layers $k$ and $L-k$ are equivalent by symmetry, and layer $k$ contains $L \choose k$ genotypes.
For $L=5$, statements (ii) and (iii) imply that there
can be at most two double or triple mutants that are peaks, or one double and one triple mutant. 
By symmetry, the case with two triple mutants need not be considered.
Additionally one peak can
be placed in layer 0 or 1 and another one in layer 4 or 5. Table \ref{maximal} lists all peak
patterns with the maximal number of peaks that are allowed based on Lemma \ref{prop} up to symmetries.
They all have 4 peaks.

\begin{table}
\centerline{
  \begin{tabular}{c|ccccc}
      $k$ & I & II & III \\
      \hline
      0 & 1 & 1 & 1 \\
      1 & 0 & 0 & 0 \\
      2 & 2 & 2 & 1 \\
      3 & 0 & 0 & 1 \\
      4 & 1 & 0 & 0 \\
      5 & 0 & 1 & 1 \\
      \hline
      total & 4 & 4 & 4 \\
      \hline
  \end{tabular}}
  \caption{Distinct peak patterns with the maximal number of peaks for fitness
landscapes with L = 5 that induce the staircase triangulation.}
\label{maximal}
\end{table}
\end{proof}

The following example shows that the bound in Theorem \ref{theoremsecond} is sharp for $L=4$.
\begin{example}
Consider a generic fitness landscape where
\[
w_{0000} \approx 20, w_{1100} \approx 4, w_{0011} \approx 4, w_{1111} \approx 20,
\]
\
\[
w_{1010} \approx 1, w_{1001} \approx 1, w_{0110} \approx 1, w_{0101} \approx 1,
\]
and  all other genotypes have fitness approximately 3.
The four genotypes $w_{0000}, w_{1100}, w_{0011}$ and $w_{1111} $ are peaks
and the landscape induces the standard staircase triangulations. (See Section \ref{circuits}
for computational details).
\end{example}
A similar argument shows that the bound for $L=5$ is sharp as well.

\begin{remark}
  Recent work by Daniel Oros has shown that the number of peaks of a fitness landscape that induces the staircase triangulation
  is at most 8, 9, and 16 for $L=6, 7$ and $8$, respectively \cite{Oros2022}. 
  \end{remark}

\smallskip
\noindent
    {\bf{Open question:}} What is the maximal number of peaks of a fitness landscape that induces the staircase triangulation
    for arbitrary $L$?

\bigskip

\subsection{Staircase triangulations in random fitness landscapes and empirical data}

The occurrence of staircase triangulations in random fitness landscapes with uniformly distributed fitness values was investigated by simulations as described in Section  \ref{Sec:Statistics}. For $L = 3$, roughly 1/8 of all landscapes induce a staircase triangulation (type 6 triangulation), see Figure~\ref{Fig:sublandscapes} and \cite{Srivastava}, but for $L = 4$ this triangulation appears to be exceedingly rare. Among 25,000 random samples we could not find a single instance of the standard staircase triangulation, which is not surprising since there are 87,959,448 distinct regular triangulations for $L = 4$ (Table \ref{table:peakpatterns}). 

To assess how common staircase triangulations are in biological fitness landscapes, we analysed the protein fitness landscape presented in \cite{Wu2016}. This landscape comprises all 160,000 sequences that can be generated by varying four loci of protein G domain B1 (GB1) using all 20 amino acid alleles at each site. GB1 is an immunoglobulin-binding protein expressed in Streptococcal bacteria and has a total of 56 amino acids. The four chosen sites contain 12 of the top 20 positive epistatic interactions among all pairwise interactions. The fitness of each sequence was determined by both its stability (i.e., the fraction of folded proteins) and its function (i.e., binding affinity to immunoglobulin) and was measured by coupling mRNA display with Illumina sequencing. The distribution of the fitness is extremely skewed, with the majority of sequences having a very small fitness, while very few sequences having a very high fitness.

\begin{figure}
\begin{center}
\includegraphics[width=7.cm]{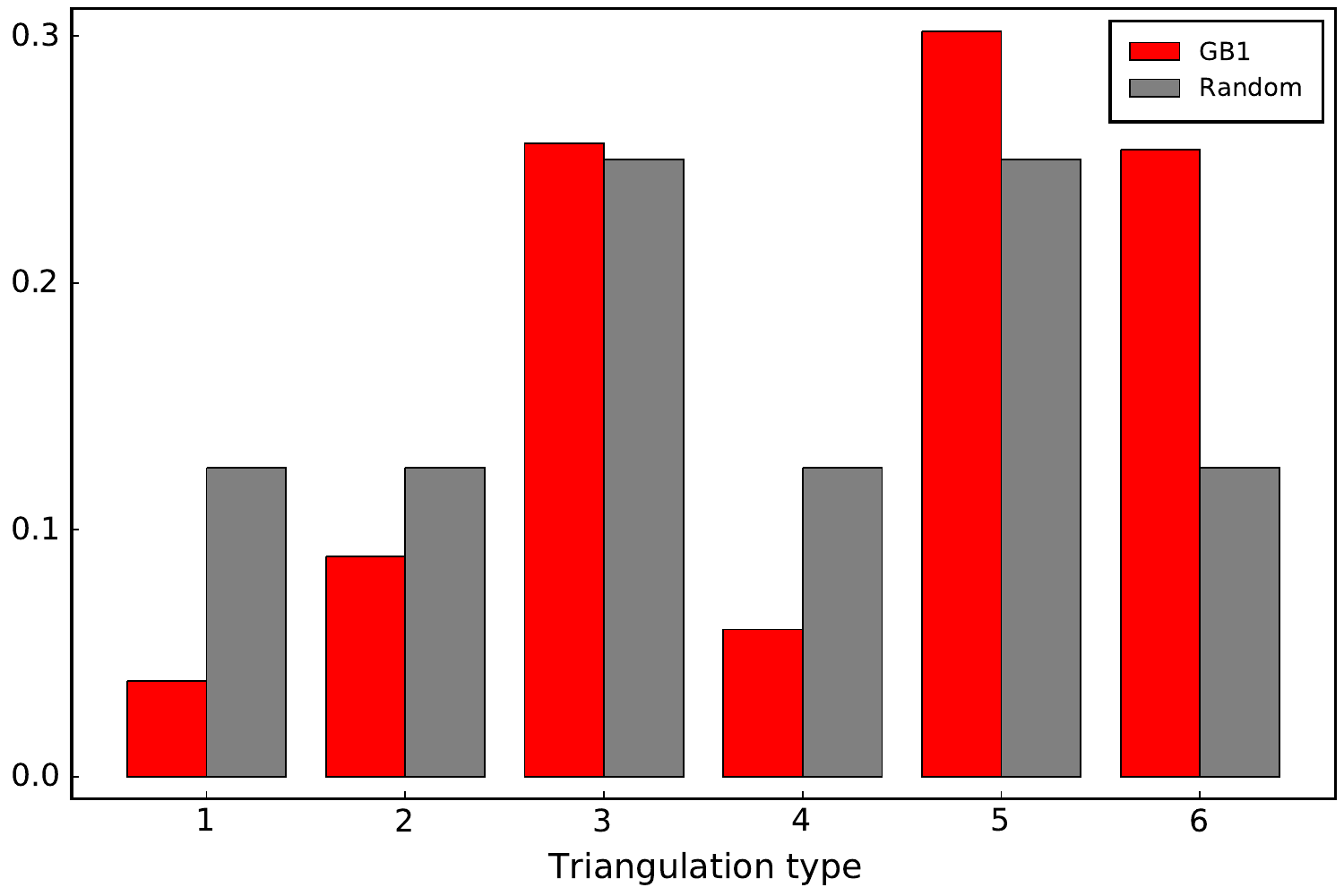}
\end{center}
\caption{Distribution of triangulation types for 3-locus random fitness landscapes with uniformly distributed fitness (grey bars)
  and biallelic 3-locus sublandscapes of the GB1 protein fitness landscape of \cite{Wu2016}. The probabilities for the random landscapes are close to the simple rationals $\frac{1}{8}, \frac{1}{8}, \frac{1}{4}, \frac{1}{8},
\frac{1}{4}, \frac{1}{8}$ \cite{Srivastava}.}\label{Fig:sublandscapes}
\end{figure}

Since the present analysis is focused on bi-allelic fitness landscapes, we examined bi-allelic sub-landscapes of the protein landscape. We started from any arbitrary sequence in the genotype space and then considered all the $2^4 = 16$ sequences that can be generated by mutating each locus to a different amino acid. For each locus, there are 19 other possibilities to mutate to, so the number of such sub-landscapes is enormously large. Therefore, we limited our analysis to 10,000 such sub-landscapes. The triangulation imposed by each sub-landscape was computed using the software Macaulay 2. For triangulations that displayed the maximal number of 24 simplices, we checked if the vertices of each of the simplices represented paths from the zero-string 0000 to its antipodal sequence 1111. Upon doing so, we found 10 landscapes that showed the standard staircase triangulation and all of them had either one or two peaks. From this, we can estimate the frequency of any staircase triangulation by multiplying by 8. Thus, we would expect to find around 80 landscapes that show staircase triangulation. Therefore, staircase triangulations are more common in the biological fitness landscape than in random fitness landscapes with completely uncorrelated fitness values. 

We additionally determined the distribution over triangulation types of 10,000 3-locus sub-landscapes in the GB1 data and compared this with the corresponding distribution for 10,000 random landscapes (Figure \ref{Fig:sublandscapes}). The fraction of GB1 landscapes of types 1, 2 and 4 are reduced in comparison to random landscapes, whereas those of types 3, 4 and 6 are enriched. Out of these, the enrichment of landscapes of type 6 is most notable. The strong over-representation of staircase triangulations in the biological sub-landscapes with 3 and 4 loci confirms the link of positive epistatic interactions to universal positive epistasis and demonstrates the biological relevance of this concept.

\section{Circuits,  Gr\"obner bases and the staircase triangulation}\label{circuits}
A triangulation can be described explicitly by a list of inequalities, or by  its minimal non-faces
(Chapter 9.4 \cite{triangulations}).
In brief, let  $J$ be a subset of the vertices of the $L$-cube. Then $J$ is a 
minimally dependent point set if there is an affine dependence relation 
$\sum_{g \in J}\lambda_g g = 0$, $\lambda_v\in\mathbb R$, with $\sum_{g \in J}\lambda_g=0$ and 
$J$ is minimal with this property. 
The expression $\sum_{g \in J}\lambda_g w_g$ is called a circuit, or a circuit interaction.
A minimal non-face is a set of genotypes that does not belong to any simplex in the triangulation,
such that the set is minimal with this property.

One can compare the  descriptions for the standard staircase triangulation  (Definition \ref{staircase}).
The minimal non-faces for $L=3$ are

\smallskip
$\{ 100, 010 \}, \{100, 001 \},  \{100, 011 \}  $,

$\{ 010, 001 \},  \{ 010, 101 \},  \{ 001, 110 \} $,

$\{ 110, 011 \},  \{ 110, 101 \},    \{ 101, 011 \}  $.

\smallskip
The triangulation can also be defined by the circuit signs:

$w_{000}+w_{110}-w_{100}-w_{010}>0$

$w_{000}+w_{101}-w_{100}-w_{001}>0$

$w_{000}+w_{011}-w_{010}-w_{001}>0$

$w_{100}+w_{111}-w_{110}-w_{101}>0$

$w_{010}+w_{111}-w_{110}-w_{011}>0$

$w_{001}+w_{111}-w_{101}-w_{011}>0$.

\smallskip
The six circuit signs imply the inequalties:

$w_{111}+w_{000}-w_{110}-w_{001}>0$

$w_{111}+w_{000}-w_{101}-w_{010}>0$

$w_{111}+w_{000}-w_{011}-w_{100}>0$.

The nine inequalities listed correspond exactly to all minimal non-faces (the two rightmost 
genotypes that appear on each line  are the genotypes in the minimal non-faces).

\begin{lemma}\label{non-faces}
The minimal non-faces for the staircase triangulation consist of the sets
\[
\{ g,  g' \}  \text{ such that } g' \not \subset g, g  \not \subset g' .
\]
\end{lemma}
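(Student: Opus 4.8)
The plan is to recognize the standard staircase triangulation as the complex whose faces are exactly the chains in the Boolean lattice of subsets of $\mathcal{L}$, and then to read off the minimal non-faces directly from the combinatorics of chains. By Definition \ref{staircase}, each maximal simplex is a walk, i.e.\ a maximal chain $\emptyset = g_0 \subset g_1 \subset \cdots \subset g_L = \mathcal{L}$ with $|g_k| = k$, and every one of the $L!$ maximal chains occurs as a simplex. This last point will be the crucial ingredient.

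The central step is to establish that a set $S$ of genotypes is a face of the triangulation \emph{if and only if} $S$ is a chain under inclusion. First I would handle the forward direction: every face is contained in some maximal simplex, which is a maximal chain, so $S$ is itself a chain. For the converse, given an arbitrary chain $S$, I would insert intermediate subsets of all missing intermediate cardinalities to extend $S$ to a maximal chain $\emptyset \subset \cdots \subset \mathcal{L}$; since Definition \ref{staircase} guarantees that all $L!$ maximal chains appear, this extension is one of the simplices, so $S$ is a face of it.

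With this face--chain dictionary in hand, a set is a non-face precisely when it fails to be a chain, i.e.\ when it contains a pair $\{g, g'\}$ with $g \not\subset g'$ and $g' \not\subset g$. To isolate the \emph{minimal} non-faces I would argue as follows. Any non-face $S$ contains such an incomparable pair $\{g, g'\}$, and this pair is itself a non-face; if $|S| > 2$ then $\{g, g'\}$ is a proper subset of $S$, so $S$ is not minimal. Conversely, every incomparable pair $\{g, g'\}$ is a minimal non-face, since its only proper subsets are the two singletons and the empty set, all of which are chains and hence faces. This identifies the minimal non-faces exactly with the incomparable pairs, as the lemma asserts.

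The main obstacle is the converse half of the face--chain correspondence: confirming that an arbitrary chain embeds into one of the \emph{listed} simplices rather than merely into some abstract maximal chain. This is exactly where the hypothesis that all $L!$ walks are represented is indispensable, and it is the one place where the specific structure of the staircase triangulation (as opposed to an arbitrary triangulation into simplices) is used. Once the dictionary is in place the remaining steps are routine; the explicit list of nine minimal non-faces for $L=3$ recorded just before the lemma provides a direct consistency check, as these are precisely the nine incomparable pairs among the eight genotypes.
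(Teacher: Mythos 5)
Your proof is correct and takes essentially the same approach as the paper's: the paper likewise uses the identification of faces with chains in the Boolean lattice (implicitly, via the fact that all $L!$ walks occur as simplices, so any chain of genotypes lies in a common simplex) and then applies the minimality argument to reduce a minimal non-face to a single incomparable pair. Your write-up simply makes the face--chain dictionary explicit where the paper compresses it, and spells out the verification that incomparable pairs are themselves minimal non-faces, which the paper asserts without detail.
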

\begin{proof}
Each pair $ \{ g,  g' \}$  as above is a minimal non-face.
Let $S$ be the set of such pairs.
 Let
$
\{ g_1, \dots, g_r \}
$
be a minimal non-face. 
The genotypes $g_1, \dots, g_r$ do not all belong to the same simplex, which excludes that 
the genotypes form a chain 
$
g_{1} \subseteq  \dots \subseteq g_{r}.
$
It follows that $g_i  \not \subset  g_j$ and  $g_j  \not \subset  g_i$ for some pair $i, j$,
and consequently that $\{g_i, g_j \}$ is a non-face. 
One concludes that
$
\{ g_1, \dots,  g_r \} =\{g_i, g_j \},
$
which implies that
$
\{ g_1, \dots, g_r \} \in S.
$
\end{proof}

\begin{observation}\label{positive}
If a fitness landscape has universal positive epistasis,
then the landscape induces the staircase triangulation.
\end{observation}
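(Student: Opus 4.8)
The plan is to prove Observation \ref{positive} by showing that universal positive epistasis forces all six defining circuit-sign inequalities of the standard staircase triangulation to hold, and then to invoke Lemma \ref{non-faces} together with the fact that the signs of all circuits determine a unique regular triangulation. The key observation is that the universal positive epistasis condition \eqref{Positive_epistasis}, namely $w_{g \cup g'} + w_{g \cap g'} \geq w_g + w_{g'}$, applied to the two incomparable genotypes of any circuit yields exactly the strict circuit-sign inequality characterizing the staircase triangulation (strictness following from genericity of $w$). Conversely the staircase triangulation is defined precisely by those circuit signs, so matching them pins down the triangulation.

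Concretely, I would proceed as follows. First, recall from the introduction that $w$ is generic, so no circuit interaction vanishes and the induced triangulation is uniquely determined by the collection of circuit signs. Second, for each minimal dependent set $J = \{g \wedge g', g \vee g', g, g'\}$ arising from a pair of incomparable genotypes $g, g'$ (these are precisely the 2-face circuits, since $g \wedge g'$ and $g \vee g'$ together with $g, g'$ form a square), the associated circuit is $w_{g \wedge g'} + w_{g \vee g'} - w_g - w_{g'}$. By Definition \ref{universal} (in its equivalent string formulation $w_{g \lor g'} + w_{g \wedge g'} \geq w_g + w_{g'}$), this circuit is $\geq 0$, hence strictly $>0$ by genericity. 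Third, I would note that these are exactly the circuit signs listed in Section \ref{circuits} that define the standard staircase triangulation; equivalently, by Lemma \ref{non-faces} every pair $\{g, g'\}$ with $g \not\subset g'$ and $g' \not\subset g$ is a non-face, which is the combinatorial content of universal positive epistasis since such a pair has $w_{g \vee g'} + w_{g \wedge g'} > w_g + w_{g'}$, forcing $g, g'$ out of any common simplex. Fourth, I would conclude that since a generic landscape induces one and only one regular triangulation, and its non-faces are exactly those of the staircase triangulation, the induced triangulation is the standard staircase triangulation.

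The main subtlety, and the step I would treat most carefully, is the passage from ``every incomparable pair is a non-face'' to ``the triangulation is the staircase triangulation.'' One must argue that specifying all minimal non-faces determines the triangulation uniquely, not merely that the staircase is consistent with the constraints. For the cube this follows because the circuits of the $L$-cube are spanned by (affine combinations supported on) the squares, so the full set of circuit signs is determined by the 2-face diagonals, and these signs in turn determine the regular triangulation (Chapter 9.4 of \cite{triangulations}). I would therefore verify that the non-faces of Lemma \ref{non-faces} leave only the chains $g_1 \subseteq \cdots \subseteq g_L$ as admissible maximal simplices, which are exactly the $L!$ monotone lattice walks of Definition \ref{staircase}; any simplex containing an incomparable pair is excluded, and a maximal chain has the right cardinality $L+1$ to be a full-dimensional simplex. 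This identifies the maximal simplices with the staircase simplices, completing the proof. The remainder is routine and can be stated without further calculation.
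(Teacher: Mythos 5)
Your proposal is in substance correct and takes essentially the paper's route: genericity makes the inequality \eqref{Positive_epistasis} strict on every incomparable pair $\{g,g'\}$ (since then $\{g\wedge g', g\vee g'\}\neq\{g,g'\}$), so the lifted segment between $g$ and $g'$ lies strictly below the upper envelope at the midpoint $X_{g,g'}$ and the pair cannot lie in a common simplex; by Lemma \ref{non-faces} these pairs are exactly the minimal non-faces of the staircase triangulation. Your concluding step --- every simplex of the induced triangulation is a chain, a full-dimensional simplex has $L+1$ vertices and a chain of $L+1$ genotypes in $\{0,1\}^L$ must be a maximal chain, i.e.\ a staircase simplex of Definition \ref{staircase} --- is a correct and in fact more explicit version of the paper's terse appeal to ``an argument similar to the proof of the previous result.'' To make it airtight, add the one-line covering remark you gloss over: the maximal simplices of the induced triangulation form a subfamily of the $L!$ staircase simplices, each of volume $1/L!$, and since a triangulation covers the cube (volume $1$) the subfamily must be all of them.

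However, one supporting assertion in your ``main subtlety'' paragraph is false and should be deleted (it is, fortunately, redundant given your chain argument). You claim that the full set of circuit signs, hence the regular triangulation, is determined by the 2-face diagonals. The paper itself contradicts this: type 1 and type 2 triangulations of the 3-cube have identical face triangulations (see the caption of Figure \ref{Fig:34peaks}), so face-diagonal data cannot pin down the triangulation; what separates them are larger circuits not supported on squares, e.g.\ the five-point circuit $w_{110}+w_{101}+w_{011}-2w_{111}-w_{000}$. Relatedly, your parenthetical identifying the circuits on $\{g\wedge g', g\vee g', g, g'\}$ as ``precisely the 2-face circuits'' is inaccurate: such a quadruple is a 2-face only when $g$ and $g'$ have Hamming distance $2$; for general incomparable pairs it is a diagonal square. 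Neither error damages the proof, because you only use the sign of each such circuit (which universal positive epistasis supplies directly for all incomparable pairs, diagonal squares included), not any claim that square circuits generate all circuit signs; but as written the paragraph asserts a determinacy principle that is simply wrong for the cube, and the correct uniqueness mechanism is the chain-and-covering argument, not circuit-sign propagation from the 2-faces.
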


\begin{proof}
It is sufficient to show that the triangulation induced by the fitness landscape 
has the same minimal non-faces as the staircase triangulation.
Let $S$ be the set of pairs
\[
\{ g,  g' \}  \text{ such that } g' \not \subset g, g  \not \subset g'.
\]
By assumption, the landscape has  universal positive epistasis so that
\[
w_{g \cup g'} + w_{g \cap g'} \geq  w_{g} + w_{g'}
\]
 for any two genotypes $g,g'$.
If $ \{ g,  g' \} \in S$, the inequality is strict. It follows that every element in $S$
is a minimal non-face.  
Moreover, a non-face of the triangulation cannot consist of
genotypes that constitute a chain. 
By an argument similar to the proof of the 
previous result, it follows that $S$ is the set 
of minimal non-faces.
Since a landscape with universal positive epistasis
has the same minimal non-faces as the staircase triangulation,
the result follows.
\end{proof}

One can use Gr\"obner bases  \cite{froberg} for finding all minimal
non-faces of a triangulation (Chapter 9.4 \cite{triangulations}).

\begin{example}
Let $k[x_{00}, x_{10}, x_{01}, x_{11}]$ and $k[x_0, x_1, y_0, y_1]$ be polynomial rings, and let  $\varphi$ be the map defined by 
\[
\varphi:  \quad \quad x_{00} \mapsto x_0 y_0,  \quad x_{10} \mapsto x_1 y_0  \quad x_{01} \mapsto  x_0 y_1,  \quad x_{11}  \mapsto x_1 y_1 
\]
Let $I={\text{ ker }} \varphi$. Then $I$ is the ideal generated by $x_{11} x_{00}-x_{10} x_{01}$.

Let 
\[
w=(w_{00}, w_{10}, w_{01}, w_{00})
\] 
be a weight vector and let $J= in_{-w} (I)$ be the initial ideal with respect to $-w$ (notice the sign here, $-w$ defines the monomial order).
If 
\[ w_{11}+w_{00} - w_{10}-w_{01}>0,
\]
then $J=(x_{10} x_{01})$.

According to theory on Gr\"obner bases and triangulations,
 the generators of $J$ correspond exactly to the  minimal non-faces of the triangulation.
One concludes that  $\{ 10, 01 \}$ is the only minimal non-face and therefore that the 
triangulation induced by $w$ consists of the simplices
 $\{00, 10, 11\}$ and $\{00, 01, 11 \}$.
\end{example}

The following result follows immediately from Sturmfels' correspondence, see Theorem 9.4.5 in 
\cite{triangulations}.

\begin{lemma}
Let $k[x_g]$ be the polynomial ring in variables labeled by the genotypes $x_g$ and let $w$ be the weight vector that defines  the fitness landscape.
Let $I \subset k[x_g]$ be the defining ideal (as in the previous example).
If $ J=in_{-w} (I)$ is square free, then the generators of $J$ correspond exactly to the minimal non-faces of the triangulation.
\end{lemma}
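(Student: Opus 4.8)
The plan is to recognize this statement as the restriction, to the square-free case, of the general Sturmfels correspondence between initial ideals of a toric ideal and regular triangulations of the underlying point configuration, exactly as packaged in Theorem 9.4.5 of \cite{triangulations}. First I would fix the point configuration: let $A$ be the set of the $2^L$ vertices of the $L$-cube, regarded as lattice points (with a homogenizing coordinate so that the construction is projective), and let $I = I_A \subset k[x_g]$ be its toric ideal, i.e.\ the kernel of the monomial parametrization extending the map $\varphi$ of the previous example. The weight vector is $-w$, the sign being chosen so that the regular subdivision determined by $-w$ is precisely the upper-envelope triangulation that was taken as the definition of the shape in the Introduction.

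Next I would invoke Sturmfels' theorem in the form: for a \emph{generic} weight vector $\omega$ the initial ideal $in_\omega(I_A)$ is a monomial ideal, and its radical $\sqrt{in_\omega(I_A)}$ is the Stanley--Reisner ideal $I_{\Delta_\omega}$ of the regular triangulation $\Delta_\omega$ induced by $\omega$. Here genericity is guaranteed by our standing assumption that $w$ is generic (no two genotypes share a fitness value), which makes $-w$ sufficiently generic that the induced regular subdivision is a full triangulation. Under this correspondence the minimal monomial generators of the Stanley--Reisner ideal are exactly the square-free monomials $\prod_{g \in N} x_g$ indexed by the minimal non-faces $N$ of $\Delta_\omega$.

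With these two ingredients the lemma follows by a one-line reduction. By hypothesis $J = in_{-w}(I)$ is square free, hence radical, so $J = \sqrt{J} = I_{\Delta}$, the Stanley--Reisner ideal of the induced triangulation $\Delta$. Therefore the minimal generators of $J$ coincide with those of $I_{\Delta}$, which by the correspondence are in bijection with the minimal non-faces of $\Delta$. This is the claimed statement; note that the square-free hypothesis is exactly what allows one to bypass the passage to the radical that is needed in the general (non-unimodular) case.

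The only point demanding care---and the place I would expect the real work to sit---is the bookkeeping that aligns the paper's conventions with those of \cite{triangulations}: verifying that the sign choice $-w$ reproduces the upper-envelope triangulation rather than its opposite ``flip'', and confirming that genericity of $w$ transfers to the genericity hypothesis required for the initial ideal to be monomial and for $\Delta_\omega$ to be an honest triangulation. Once these conventions are matched, the statement is, as claimed, an immediate specialization of Theorem 9.4.5.
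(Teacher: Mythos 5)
Your proposal is correct and follows exactly the route the paper takes: the paper gives no proof beyond declaring the lemma an immediate consequence of Sturmfels' correspondence (Theorem 9.4.5 of \cite{triangulations}), which is precisely the theorem you invoke. Your write-up in fact supplies the details the paper leaves implicit---the identification $J=\sqrt{J}=I_\Delta$ via the square-free hypothesis, and the sign bookkeeping showing that $in_{-w}$ matches the upper-envelope convention for the shape---all of which are handled correctly.
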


\smallskip
The following observations are useful for computational purposes.
The results can be deduced from general theory on binomial ideals \cite{sturmfelsb}.
We provide an explicit argument for the readers' convenience.
In addition to the application here (see below) the result should be useful
for research on triangulations and peak patterns for  higher $L$.

\begin{observation}
Let $S$ be the set consisting of all elements of the form
\[
x_g \cdot x_{g'} - x_{g \lor g'}  \cdot   x_{g \wedge g'},
\]
where  $g$ and  $g'$  are genotypes such that $ g' \not \subset g$ and $ g  \not \subset  g'$.
Consider the monomial order defined by a fitness landscape that induces
the standard staircase triangulation. With notation as in Example 2,
the set $S$ is a  Gr\"obner basis for the defining ideal $I \subset k[x_g]$ .
\end{observation}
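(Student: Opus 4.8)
\medskip

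\noindent\textbf{Proof proposal.}
The plan is to check the two conditions that characterize a Gr\"obner basis for the monomial order $\prec$ defined by $-w$: that $S\subseteq I$, and that the leading terms of the elements of $S$ generate the initial ideal $\mathrm{in}_{-w}(I)$. The containment is the easy half. Since $\varphi$ is a monomial map that factors locus by locus, for each position $i$ the unordered pair of bits $\{g_i,g_i'\}$ equals $\{(g\lor g')_i,(g\wedge g')_i\}$, simply because $\max$ and $\min$ reproduce the same two values. Hence $\varphi(x_g x_{g'})=\varphi(x_{g\lor g'}x_{g\wedge g'})$ for every incomparable pair, so each generator of $S$ lies in $\ker\varphi=I$.

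Next I would identify the leading terms. Because $w$ induces the standard staircase triangulation, it satisfies universal positive epistasis, so for every incomparable pair the inequality \eqref{Positive_epistasis} is strict, i.e. $w_{g\lor g'}+w_{g\wedge g'}>w_g+w_{g'}$. Under the order given by $-w$ the initial term is the one of largest $-w$-weight, and the strict inequality shows this is $x_g x_{g'}$ (exactly as in Example 2, where $w_{11}+w_{00}>w_{10}+w_{01}$ forces the leading term $x_{10}x_{01}$). Therefore $\mathrm{LT}(S)$ is precisely the set of square-free quadratic monomials $x_g x_{g'}$ indexed by the incomparable pairs $g'\not\subset g$, $g\not\subset g'$.

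It then remains to match $\langle\mathrm{LT}(S)\rangle$ with $\mathrm{in}_{-w}(I)$. By Lemma \ref{non-faces} the minimal non-faces of the staircase triangulation are exactly the incomparable pairs. Invoking the Sturmfels correspondence (the Lemma stated just above via Theorem 9.4.5 of \cite{triangulations}), the generators of $\mathrm{in}_{-w}(I)$ correspond one-to-one to these minimal non-faces, so that $\mathrm{in}_{-w}(I)=\langle x_g x_{g'} : g'\not\subset g,\ g\not\subset g'\rangle$. Since this is exactly $\langle\mathrm{LT}(S)\rangle$, the leading terms of $S$ generate the initial ideal and $S$ is a Gr\"obner basis; in particular $S$ generates $I$.

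The one step requiring care, and the main obstacle, is that the cited correspondence applies only once we know $\mathrm{in}_{-w}(I)$ is square-free. I would secure this from unimodularity of the staircase triangulation, whose simplices are the monotone lattice-path simplices of normalized volume one. Alternatively one can sidestep the issue entirely: the radical of $\mathrm{in}_{-w}(I)$ always equals the Stanley--Reisner ideal of the induced triangulation, so from the chain $\langle\mathrm{LT}(S)\rangle\subseteq\mathrm{in}_{-w}(I)\subseteq\sqrt{\mathrm{in}_{-w}(I)}=\langle\mathrm{LT}(S)\rangle$ — the last equality because $\langle\mathrm{LT}(S)\rangle$ is already the square-free Stanley--Reisner ideal of incomparable pairs — equality propagates and no separate unimodularity input is needed. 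A fully self-contained route would instead verify Buchberger's criterion: the only nontrivial overlaps come from generators whose leading monomials share one variable, say $x_g x_a$ and $x_a x_b$, whose $S$-polynomial is $x_g x_{a\lor b}x_{a\wedge b}-x_b x_{g\lor a}x_{g\wedge a}$; reducing every such expression to zero by the lattice-exchange relations of $S$ is possible but combinatorially heavier, which is why I favor the correspondence-based argument above.
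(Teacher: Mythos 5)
Your proposal is correct and follows the same backbone as the paper's proof: show $S\subseteq I$, identify the leading terms, and match $\langle\mathrm{LT}(S)\rangle$ with $\mathrm{in}_{-w}(I)$ via Lemma \ref{non-faces} together with the Sturmfels correspondence. Where you differ is in rigor at exactly the right spot. The paper's proof asserts that $S$ generates $I$, takes a generator $x_{g_1}\cdots x_{g_k}$ of $J=\mathrm{in}_{-w}(I)$, and invokes ``the previous lemma'' to say $\{g_1,\dots,g_k\}$ is a minimal non-face --- but that lemma is stated with the hypothesis that $J$ is square-free, and the paper never verifies this hypothesis. You noticed the circularity risk and closed it two ways: by unimodularity of the staircase triangulation (correct --- each monotone-path simplex has normalized volume one, and $L!$ of them tile the cube), and, more elegantly, by the radical sandwich $\langle\mathrm{LT}(S)\rangle\subseteq\mathrm{in}_{-w}(I)\subseteq\sqrt{\mathrm{in}_{-w}(I)}$, where the last ideal is the Stanley--Reisner ideal of the induced triangulation and hence, by Lemma \ref{non-faces}, equals $\langle\mathrm{LT}(S)\rangle$; equality then propagates and square-freeness comes out as a conclusion rather than an assumption. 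You also made explicit two steps the paper leaves implicit: that $S\subseteq\ker\varphi$ because $\max$ and $\min$ preserve the multiset of bits at each locus, and that the strict form of inequality \eqref{Positive_epistasis} for incomparable pairs pins down $x_g x_{g'}$ as the leading term under the $-w$ order. Your argument thus buys a self-contained justification where the paper's proof has a genuine (if repairable) gap; the paper's version buys brevity, and your sketch of the Buchberger alternative correctly judges it as heavier than necessary.
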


\begin{proof}
Let $I$ be the defining ideal and let $J=in_{-w} (I)$.
It is easy to verify that $S$ generates $I$. 
Let $x_{g_1} \cdots x_{g_k}$ be a generator for $J$.
It is sufficient to show that the generator is divisible by a leading term of $S$.

By the previous lemma, the set $\{ g_1, \dots, g_k \}$
is a minimal non-face of the triangulation. By Lemma \ref{non-faces},
 \[
 \{ g_1, \dots, g_k \}=\{g_i, g_j \} 
 \]
where $g_i  \not \subset  g_j$ and  $g_j  \not \subset  g_i$.
By construction, the element
\[
x_{g_i} x_{g_j}  - x_{g_i \lor g_j}  \cdot x_{g_i \wedge g_j}   \in S.
\]
It follows that $x_{g_1} \cdots x_{g_k}$ is divisible by $x_{g_i} x_ {g_j}$
and consequently  that   
\[
x_{g_1} \cdots x_{g_k}=x_{g_i} x_ {g_j}.
\]
\end{proof}

\begin{observation}
For $S$ as defined above,
$ \vert S \vert =2^{L-1}(2^L+1)-3^L$.
By construction,  $\vert S \vert$ equals the number of minimal
non-faces.  
\end{observation}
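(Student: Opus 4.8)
The plan is to interpret $|S|$ combinatorially and then count by complementary counting. Each element of $S$ is determined by an unordered pair $\{g,g'\}$ of genotypes with $g'\not\subset g$ and $g\not\subset g'$, since the binomial $x_g x_{g'}-x_{g\lor g'}\,x_{g\land g'}$ is invariant under swapping $g$ and $g'$. Thus $|S|$ equals the number of unordered pairs of \emph{incomparable} subsets of $\mathcal{L}$, and the task reduces to counting such pairs.

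First I would count all unordered pairs of distinct genotypes, of which there are $\binom{2^L}{2}=\tfrac{1}{2}2^L(2^L-1)$, and then subtract the comparable ones, namely those $\{A,B\}$ with $A\subsetneq B$ or $B\subsetneq A$. To count comparable pairs I would count ordered pairs $(A,B)$ with $A\subseteq B$ by a per-coordinate argument: each element $i\in\mathcal{L}$ lies in neither set, in $B$ only, or in both, giving three independent choices and hence $3^L$ such ordered pairs. Removing the $2^L$ diagonal pairs with $A=B$ leaves $3^L-2^L$ ordered proper containments; since each unordered comparable pair corresponds to exactly one such ordered pair (smaller set first), the number of comparable unordered pairs is $3^L-2^L$.

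Combining these counts gives $|S|=\binom{2^L}{2}-(3^L-2^L)$, and a short simplification collapses this to the stated closed form:
\[
|S|=\frac{2^L(2^L-1)}{2}+2^L-3^L=\frac{2^L(2^L+1)}{2}-3^L=2^{L-1}(2^L+1)-3^L.
\]
Finally, the identification of $|S|$ with the number of minimal non-faces is immediate from Lemma~\ref{non-faces}, which states that the minimal non-faces of the staircase triangulation are exactly the incomparable pairs $\{g,g'\}$; the bijection with $S$ established above then matches the two counts.

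There is no serious obstacle here, as the argument is elementary counting. The only point requiring care is the bookkeeping in the comparable-pair count, namely remembering to exclude the diagonal $A=B$ and to pass correctly from ordered to unordered pairs; one may sanity-check the result against the small cases $L=2$ and $L=3$, where the formula yields $|S|=1$ and $|S|=9$, matching the explicit lists of minimal non-faces given above.
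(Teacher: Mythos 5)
Your proof is correct, but it takes a genuinely different route from the paper's. The paper counts incomparable pairs directly: for a fixed genotype $g$ with $k$ ones it counts the $n_k = 2^L - 2^k - 2^{L-k} + 1$ genotypes incomparable to $g$, then computes $\vert S \vert = \frac{1}{2}\sum_{k=1}^{L-1}\binom{L}{k} n_k$, which requires evaluating the binomial sums (using $\sum_k \binom{L}{k}2^k = 3^L$) to reach the closed form. You instead count by complementation: all $\binom{2^L}{2}$ unordered pairs minus the comparable ones, where the number of ordered containments $A \subseteq B$ is obtained at once by the per-coordinate argument (each $i \in \mathcal{L}$ lies in neither, in $B$ only, or in both, giving $3^L$), so the comparable unordered pairs number $3^L - 2^L$. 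The two approaches are closely related --- the $3^L$ appears in the paper only after summation, whereas you produce it directly and structurally --- but yours avoids the layer-by-layer sum entirely and is arguably cleaner; the paper's version has the minor advantage of exhibiting, for each genotype, how many incomparable partners it has, which is refined information the complementary count discards. Your reduction of $\vert S \vert$ to the count of incomparable pairs via Lemma \ref{non-faces}, the bookkeeping from ordered to unordered pairs, and the final simplification
\[
\binom{2^L}{2} - \bigl(3^L - 2^L\bigr) = \frac{2^L(2^L+1)}{2} - 3^L = 2^{L-1}(2^L+1) - 3^L
\]
are all correct, and your sanity checks ($\vert S \vert = 1$ for $L=2$ and $\vert S \vert = 9$ for $L=3$) match the explicit lists in the paper.
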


\begin{proof} According to Lemma \ref{non-faces}, $\vert S \vert$ is the number of pairs of genotypes $g,g'$ that are neither subsets nor
  supersets of each other. A genotype $g$ with $k$ 1's has $2^k-1$ subsets and $2^{L-k}-1$ supersets. The number of
  other genotypes that are neither subsets nor supersets of $g$ is therefore
  \[
  n_k = 2^L-1-(2^k-1+2^{L-k}-1) = 2^L - 2^k-2^{L-k}+1.
  \]
  Multiplying this with the number of genotypes of size $k$, summing over $k$ and dividing by 2 to avoid overcounting of pairs one arrives at
  \[
  \vert S \vert = \frac{1}{2} \sum_{k=1}^{L-1} {L \choose k}n_k = 2^{L-1}(2^L+1)-3^L.
  \]
  \end{proof}

For $L=4$ the Gr\"obner basis described above can be given explicitly (see below).
The result  was applied for verifying that the fitness landscape in Example 1 induces the staircase triangulation.

\newpage

{\emph{Gr\"obner basis for $L=4$:}}

\smallskip

$x_{1011} \cdot x_{0111}-x_{0011} \cdot  x_{1111}, \quad  x_{1101}  \, x_{0111}-x_{0101}  \cdot x_{1111}$,

$x_{1110} \cdot x_{0111}-x_{0110}  \cdot x_{1111}, \quad  x_{1101}  \cdot x_{1011}-x_{1001} \cdot  x_{1111}$,

$x_{1110} \cdot x_{1011}-x_{1010} \cdot x_{1111}, \quad x_{1110} \cdot x_{1101}-x_{1100} \cdot x_{1111}$,

$x_{1001} \cdot x_{0111}- x_{0001} \cdot x_{1111},  \quad  x_{1010} \cdot x_{0111}-x_{0010} \cdot  x_{1111}$,

$x_{1100} \cdot x_{0111}-x_{0100}  \cdot x_{1111}, \quad x_{0101}  \cdot x_{1011}-x_{0001} \cdot  x_{1111}$,

$x_{0110} \cdot x_{1011}-x_{0010} \cdot  x_{1111},   \quad  x_{1100} \cdot x_{1011}-x_{1000}  \cdot x_{1111}$,

$x_{0011} \cdot  x_{1101}-x_{0001} \cdot  x_{1111}, \quad  x_{0110} \cdot x_{1101}-x_{0100}  \cdot x_{1111} $

$x_{1010} \cdot  x_{1101}-x_{1000} \cdot  x_{1111},  \quad x_{0011} \cdot  x_{1110}-x_{0010}  \cdot x_{1111} $,

$x_{0101} \cdot  x_{1110}-x_{0100} \cdot  x_{1111},  \quad x_{1001} \cdot x_{1110}-x_{1000}  \cdot x_{1111} $,

$x_{1000} \cdot  x_{0111}-x_{0000} \cdot x_{1111},  \quad  x_{0100} \cdot  x_{1011}-x_{0000} \cdot x_{1111}$,

\smallskip
$x_{0010}  \cdot x_{1101} - x_{0000} \cdot x_{1111}, \quad  x_{0001} \cdot x_{1110}-x_{0000}  \cdot x_{1111} $, 

$x_{0101}  \cdot x_{0011} - x_{0001} \cdot x_{0111},  \quad x_{0110} \cdot x_{0011}-x_{0010}  \cdot x_{0111} $, 

$ x_{1001}  \cdot x_{0011} -x_{0001} \cdot x_{1011},   \quad x_{1010}  \cdot x_{0011}-x_{0010} \cdot x_{1011} $,
 
$ x_{1100}  \cdot x_{0011}-x_{0000} \cdot x_{1111},  \quad x_{0110}   \cdot x_{0101}-x_{0100} \cdot x_{0111} $,
 
$ x_{1001} \cdot x_{0101}-x_{0001} \cdot  x_{1101},  \quad x_{1010} \cdot x_{0101}-x_{0000}  \cdot x_{1111} $,

$x_{1100} \cdot x_{0101}-x_{0100} \cdot x_{1101},  \quad x_{1001}  \cdot x_{0110}-x_{0000} \cdot x_{1111}$,
  
$x_{1010} \cdot x_{0110}-x_{0010}  \cdot x_{1110}, \quad x_{1100} \cdot  x_{0110}-x_{0100}  \cdot x_{1110} $,
 
$x_{1010} \cdot x_{1001}-x_{1000} \cdot x_{1011}, \quad x_{1100}  \cdot x_{1001}-x_{1000}  \cdot  x_{1101}$,
 
$x_{1100} \cdot x_{1010}-x_{1000}  \cdot x_{1110}, \quad x_{0100}  \cdot x_{0011}-x_{0000}  \cdot x_{0111}$, 

$x_{1000} \cdot x_{0011}-x_{0000}  \cdot x_{1011}, \quad  x_{0010}  \cdot  x_{0101}-x_{0000} \cdot x_{0111}$,

\smallskip

$x_{1000} \cdot  x_{0101}-x_{0000} \cdot x_{1101}, \quad  x_{0001} \cdot x_{0110}-x_{0000} \cdot x_{0111} $

$x_{1000} \cdot x_{0110}-x_{0000} \cdot x_{1110},  \quad x_{0010} \cdot x_{1001}-x_{0000} \cdot  x_{1011}$, 

$x_{0100} \cdot x_{1001}-x_{0000}  \cdot x_{1101},  \quad x_{0001} \cdot x_{1010}-x_{0000} \cdot x_{1011}$, 

$x_{0100} \cdot x_{1010}-x_{0000}  \cdot x_{1110}, \quad  x_{0001} \cdot  x_{1100}-x_{0000}  \cdot x_{1101}$, 

$x_{0010}  \cdot x_{1100}-x_{0000} \cdot  x_{1110},  \quad x_{0010} \cdot x_{0001}-x_{0000} \cdot x_{0011}$

\smallskip

$x_{0100} \cdot x_{0001} - x_{0000} \cdot x_{0101},  \quad   x_{1000} \cdot x_{0001}-x_{0000} \cdot x_{1001}$,

$x_{0100} \cdot x_{0010}-x_{0000} \cdot x_{0110},   \quad x_{1000} \cdot x_{0010}-x_{0000} \cdot x_{1010}$,
 
$x_{1000} \cdot x_{0100}-x_{0000} \cdot x_{1100} $

\bmhead{Acknowledgments}
This work was initiated by a mini-symposium at the 2019 SIAM Conference on Applied Algebraic Geometry. We thank Lisa Lamberti for organizing
this symposium, for many useful discussions, and for a careful reading of our
manuscript. We also thank Muhittin Mungan, Daniel Oros and Anna Seigal for helpful
remarks, and two anonymous reviewers for their constructive suggestions.

\begin{center}
\begin{table}
\renewcommand{\arraystretch}{2}
\begin{tabular}{|p{3cm}|p{3.7cm}|p{5cm}|}\hline
{\bf Notation}  & {\bf Mathematical term} &  {\bf Biological term} \\ \hline
$\{0,1\}^L$ & Bit strings of length $L$ & The set of $2^L$ genotypes in a {\bf{biallelic $L$-locus}} system  \\ \hline
$g_i$ & $i$:th position of the bit string $g$ &  {\bf{Locus $i$}} of genotype $g$ \\ \hline
$g_i=1$ &  &  $g$ has a 1-{\bf{allele}} at locus $i$  \\ \hline
$[0,1]^L$ & Unital $L$-dimensional hypercube & Genotope for a biallelic $L$-locus system (terminology used in the shape theory)\\ \hline
$g, g' \in \{0,1\}^L : |g-g|=d$  & Vertices with distance $d$ in the undirected cube graph  &  Genotypes with  {\bf{Hamming distance $d$}}  \\ \hline
                                                    &  If $d$ (as above) is 1   & The genotypes $g$ and $g'$ are {\bf{mutational neighbors}}  \\ \hline
$w: \{0,1\}^L \to \mathbb{R}_{\geq 0} ; g \mapsto w(g)=w_g$
&      & {\bf{Fitness landscape}} \\ \hline
&  Acyclic directed hypercube graphs induced by  fitness landscapes & {\bf{Fitness graphs:}} the vertices represent genotypes and edges between mutational neighbors are directed towards genotypes with higher fitness.\\ \hline
& A vertex $g$ is a {\bf{sink}}  in the fitness graph if all edges  incident to $g$ are directed towards  $g$.  
&  A genotype $g$ is a {\bf{peak}} in the fitness landscape (and fitness graph) if all mutational neighbors of $g$ have strictly lower  fitness than  $g$.  \\ \hline
 & The triangulation of $[0,1]^L$ induced by the fitness landscape & The {\bf{shape}} of the fitness landscape \\ \hline
\end{tabular}
\caption{Dictionary of commonly used terms.}\label{notation}
\end{table}
\end{center}

\clearpage


\bibliography{shapesfreferences}


\end{document}